\documentclass[10pt,final]{IEEEtran}
\usepackage{amsmath}
\usepackage{amssymb}
\usepackage{amsfonts}
\usepackage{latexsym}
\usepackage{graphicx}
\usepackage{enumerate}
\usepackage{multicol}
\usepackage{color}
\usepackage{array}
\usepackage{algorithm,algorithmic}
\usepackage{bm}
\newtheorem{lemma}{Lemma}
\IEEEoverridecommandlockouts
\allowdisplaybreaks[0]

\begin{document}
\title{Robust Design of Integrated Sensing and Communication in LEO Satellite Systems}
\author{Hezhen Yang, Xiaoming Chen, and Qi Wang
\thanks{Hezhen Yang, Xiaoming Chen, and Qi Wang are with the College of Information Science and Electronic Engineering, Zhejiang University, Hangzhou 310027, China (e-mails: \{yang\_hezhen, chen\_xiaoming, wang-qi\}@zju.edu.cn).}}
\maketitle

\begin{abstract}
With the growing demand for satellite sensing and communication, the limited wireless resources are difficult to support multiple satellite systems. Therefore, it is desired to investigate integrated sensing and communication (ISAC) in low Earth orbit (LEO) satellite systems to enable multi-functionality within a single satellite, thereby saving both spectrum and orbital resources. In this paper, a framework for ISAC in LEO satellite systems is established, where a satellite can simultaneously sense multiple targets and serve multiple communication users (CUs) over the same spectrum. Considering the limited onboard energy of satellite, a novel robust beamforming design algorithm is developed with the goal of minimizing total transmit power while satisfying the mean squared error (MSE) requirements for sensing and signal-to-interference-plus-noise ratio (SINR) requirements for communication in presence of channel phase uncertainty which exacerbates the cross-functional interference. According to theoretical analysis, the proposed algorithm for ISAC in LEO satellite systems is effective. Moreover, extensive simulations confirm the superiority of the proposed algorithm over baselines.
\end{abstract}

\begin{IEEEkeywords}
Low Earth orbit satellite, integrated sensing and communication, robust beamforming, channel phase uncertainty.
\end{IEEEkeywords}

\section{Introduction}             
\IEEEPARstart{W}{ith} the accelerated development of numerous advanced wireless services, a huge quantity of devices across the world require access to wireless networks \cite{CRAIoT}, \cite{Internet_of}. At present, traditional terrestrial wireless networks have been widely used in cities and conventional sites, however, in remote and sparsely populated areas, they are often unable to provide efficient services due to economic and environmental constraints \cite{A_key_6G}. Specifically, these areas are limited in the construction of traditional infrastructure, which leads to information blockage and delays in decision-making. In contrast, satellite communications can overcome these limitations and achieve global coverage \cite{Research_on}. 
In this context, although geostationary Earth orbit (GEO) satellites are widely employed in communication, low Earth orbit (LEO) satellites receive more and more attentions. Owing to their lower orbital altitude, LEO satellites have relatively low propagation delays. At the same time, benefiting from the short propagation distance of LEO satellites, the signal loss caused by propagation is smaller. In addition, LEO satellites, due to their relative movement to the Earth, can be connected even if there are obstacles near the terminals \cite{LEO_Satellite}. 

Nowadays, satellite communication technology has made significant progress. In particular, the new generation LEO satellite communication system is attempting to reposition the base station function to the near-Earth orbit by utilizing terrestrial mobile communications technology, aiming to achieve efficient and direct communications between satellite platforms and ground terminals. Considering factors such as the limitation on the quantity of satellites, it is crucial to integrate satellite networks with terrestrial networks \cite{Space-terrestrial}. For example, researchers put forward an integrated terrestrial-satellite network access architecture that provides high-speed links to communication users (CUs) desiring different quality of service (QoS) through multiple physical layer technologies \cite{Ultra-Dense}. Owing to their promising prospects, LEO communication satellite systems have seen numerous commercial applications, such as SpaceX's Starlink.

In addition to communication function, since 2000, sensing technologies have been widely used in resource measurement, urban planning, agricultural development and other scenarios \cite{Progress_and_challenges}. Several studies have employed the approach of acquiring targets' reflection coefficients as a sensing method. For instance, in \cite{Optimum_power}, the authors employed the acquired reflection coefficients as distinctive features to classify different types of objects. Through the past few years, the accelerated development of technologies such as sensing satellite platform and imaging payload is changing the function of sensing satellites from observing the Earth to real-time target sensing. With low transmission latency, LEO satellites are capable of this mission. Furthermore, their reduced free-space attenuation enables more accurate sensing \cite{Analysis_of_LEO}. For instance, in \cite{GLORIA} the authors presented a continuous GEO/LEO radar image acquisition system based on a constellation consists of several GEO radar transmitting satellites and several LEO synthetic aperture radar (SAR) receiving satellites.

On the other hand, the appearance of new applications such as telehealth and intelligent transportation has caused a rapid rise in demand for real-time communication and sensing \cite{6G_wireless_networks}. However, future wireless systems' throughput is typically constrained by increasingly crowded spectrum \cite{A vision}. In fact, ever since a long time ago, communication, sensing and other functions have been isolated and separated into different systems. For diversified, cross domain scenarios, multiple signal transmission and coordination will be required, leading to resource waste and response delays. To address this, researchers initially proposed the concept of radar-communication coexistence (RCC) \cite{Radar and}, which is a special case for integrated sensing and communications (ISAC). However, this approach, which employs orthogonal spectrum sharing, has limited effect on the improvement of both radar sensing and wireless communication. Therefore, radar-centered communication integration schemes have been proposed, aiming at transmitting radar information and communication signals through the same spectrum concurrently. As an instance, authors in \cite{Phase-modulation} proposed to transmit specific bits by modulating the phase and amplitude of the radar signal. However, since radar signals have a limited range of variance, this radar-centric communication integration faces challenges in terms of information transmission rate.
Meanwhile, the concept of sensor-assisted communication is gradually being introduced aiming at utilizing sensing information to improve the performance of wireless communication \cite{Sensor assisted wireless}. 
However, in sensor-assisted communication, the sensing function is still limited to some extent. In this context, ISAC becomes a viable option for realizing 6G wireless networks \cite{Integrating sensing}, \cite{Toward Multi-Functional}. Both sensing and communication performance can be improved by optimizing the wireless transceiver. In \cite{Integrating Sensing Computing}, a framework integrating sensing, computing and communication was proposed. Besides, two beamforming algorithms, derived from a multi-objective optimization framework, were also introduced. 

However, limitations in regional coverage and the speed of data processing can constrain these ground-based ISAC systems. In contrast to the terrestrial ISAC systems, satellite systems offer wide coverage, long-range communication capability, high flexibility, and relatively small transmission loss. These advantages make LEO satellites a perfect option for wide-area or even global ISAC systems. Yet, the scale of onboard hardware is constrained by the conventional LEO satellites' limited payload capacity. Besides, these satellites also suffer from performance degradation due to substantial Doppler shifts and propagation delays \cite{Ubiquitous_integrated}. Thankfully, the new-generation LEO satellites' improved payload processing capability allows for the simultaneous execution of multiple operations on just one hardware platform. Moreover, with the global expansion of LEO satellites, there has been a great improvement in their communication capability and sensing accuracy, both in time and space \cite{RSMA-based}.
For instance, the study in \cite{Beam_squint} was dedicated to ISAC technology's utilization in massive multiple-input multiple-output (MIMO) LEO satellite networks, providing wider coverage for ISAC compared to traditional ground-based ISAC methods.
The authors in \cite{Joint_Sensing} discussed the strategy of joint communication, sensing and compression for satellite-terrestrial networks, and jointly optimized the satellite's transmission power, the central processor frequency, the ratio of compression and the rate of sensing, proposing a highly efficient method to tackle the non-convex optimization problem.

Nevertheless, existing technologies of satellite ISAC are commonly based on perfect channel state information (CSI). However, obtaining the perfect CSI is impractical for fast-mobility LEO satellites \cite{MIMO_satellite}. In practice, noise, quantization errors, etc., can affect the accuracy of CSI. Besides, there is a delay during CSI acquisition, which will result in phase error in the CSI obtained by the satellite. Imperfect CSI may lead to QoS outage and other serious problems, causing performance degradation. Therefore, the imperfect CSI issue should be considered with robust algorithms designed to reduce its impact. 
In fact, in satellite communication, the issue of channel phase uncertainty has been partially studied. For example, in \cite{Proportionally_fair}, the authors proposed a robust beamforming scheme under phase uncertainty for multibeam multicast systems. Moreover, in \cite{Robust_Design}, the authors considered channel phase uncertainty and proposed robust beamforming algorithms for LEO satellites with energy constraints.
However, in the context of multi-user, multi-target ISAC, only a limited number of studies incorporate imperfect CSI. For instance, in \cite{Robust_Beamforming_for}, the authors considered target position uncertainty in high-mobility scenarios and modeled complex motion errors via particle filtering. In \cite{Robust_transceiver}, bounded estimation errors in the channel matrix were considered, and worst-case robust optimization problems were formulated. However, these studies do not account for the intricate cross-functional interference due to imperfect CSI. As mentioned before, in satellite ISAC systems, channel estimation error arises from noise, quantization errors, and acquisition delays. The phase component of the channel is particularly vulnerable to these impairments \cite{Robust_Design}. Thus, it is reasonable and critical to explicitly address such phase errors in robust designs for satellite ISAC.
However, directly extending existing robust designs to satellite ISAC systems for multi-user and multi-target scenarios will bring a series of problems. In such systems, serving multiple users and sensing multiple targets involves numerous transmission links, each characterized by stochastic phase errors with different variances. This complexity introduces a twofold challenge beyond isolated function performance degradation. First, the coupling between sensing and communication functions means that phase errors simultaneously degrade the estimation accuracy of target parameters and the quality of user-received signals. Second, these errors randomly exacerbate the intricate interference among all concurrent sensing and communication waveforms. Therefore, guaranteeing high-quality instantaneous service for both functions under such conditions remains a critical, yet unresolved, robust design problem.

Driven by this, we aim to establish a robust beamforming framework for ISAC system in the presence of channel phase uncertainty. We summarize our contributions as follows.
\begin{enumerate}
\item We put forward a framework for ISAC in LEO satellite systems considering the cross-functional interference exacerbated by channel phase uncertainty.

\item We derive the QoS metrics for communication and sensing, i.e., the mean squared error (MSE) and the signal-to-interference-plus-noise ratio (SINR), and construct an optimization problem to minimize the total transmit power, while meeting the QoS requirements.

\item We propose an iterative robust algorithm for ISAC in LEO satellite systems by alternately optimizing the receive and transmit beamforming vectors to minimize the transmit power and analyze key parameters such as QoS level, channel phase uncertainty level's influence on the performance of the algorithm.

\end{enumerate}

This paper is structured as follows: Section \uppercase\expandafter{\romannumeral2} presents a system model for ISAC in LEO satellite systems. Section \uppercase\expandafter{\romannumeral3} proposes a robust beamforming design algorithm to minimize the transmit power. Section \uppercase\expandafter{\romannumeral4} presents simulation results to verify the proposed algorithm's effectiveness. Finally, Section \uppercase\expandafter{\romannumeral5} concludes the paper.

\emph{Notations}: We use ordinary letters to represent scalars, bold upper (lower) letters to denote matrices (column vectors). ${\left(  \cdot  \right)^{\rm{T}}}$ and ${\left(  \cdot  \right)^{\rm{H}}}$ represent the transpose and the conjugate transpose. ${\mathop{\rm Rank}\nolimits} \left(  \cdot  \right)$ and ${\mathop{\operatorname{tr}}\nolimits} \left(  \cdot  \right)$ indicate the rank, and the trace of a matrix. $\left|  \cdot  \right|$ and $\left\|  \cdot  \right\|$ represent the absolute value and Euclidean norm, $\mathbb{E}\{ \cdot \}$ and $\odot $ represent expectation and the Hadamard product. $\operatorname{Re}\{ \cdot \}$ and $\operatorname{Im}\{ \cdot \}$ stand for the extraction of real and imaginary parts. ${\mathbb{R}^{m \times n}}$, $\mathbb{R}{_ + ^{m \times n}}$, and ${\mathbb{C}^{m \times n}}$ are the sets of $m \times n$ dimensional real, non-negative real and complex matrices, respectively. ${{\cal K}^K}$ and ${{\cal S}^K}$ represent the skew-symmetric and symmetric matrices. We use the notation $\operatorname{diag}({\bf{x}})$ to represent a square matrix which employs the elements of vector ${\bf{x}}$ as the principal diagonal entries, with all off-diagonal elements being zero, notation $\operatorname{blkdiag}(\mathbf{X}_1, \mathbf{X}_2, \cdots, \mathbf{X}_n)$ to represent a block diagonal matrix comprising $\mathbf{X}_1, \mathbf{X}_2, \cdots, \mathbf{X}_n$ as its principal diagonal blocks, with all off-block-diagonal elements being zero. ${\left[ {\bf{X}} \right]_{m,n}}$ represents matrix ${\bf{X}}$'s [m,n]-th element. ${\left(  \cdot  \right)^{\text{c}}}$, ${\left(  \cdot  \right)^{\text{s}}}$ and ${\left(  \cdot  \right)^{\rm{t}}}$ denote the variables associated with the channels from the satellite to CUs, from the satellite to sensing targets, and from the sensing targets to CUs, respectively.

\section{System model}

\begin{figure}
    \centering
    \includegraphics[width=1\linewidth]{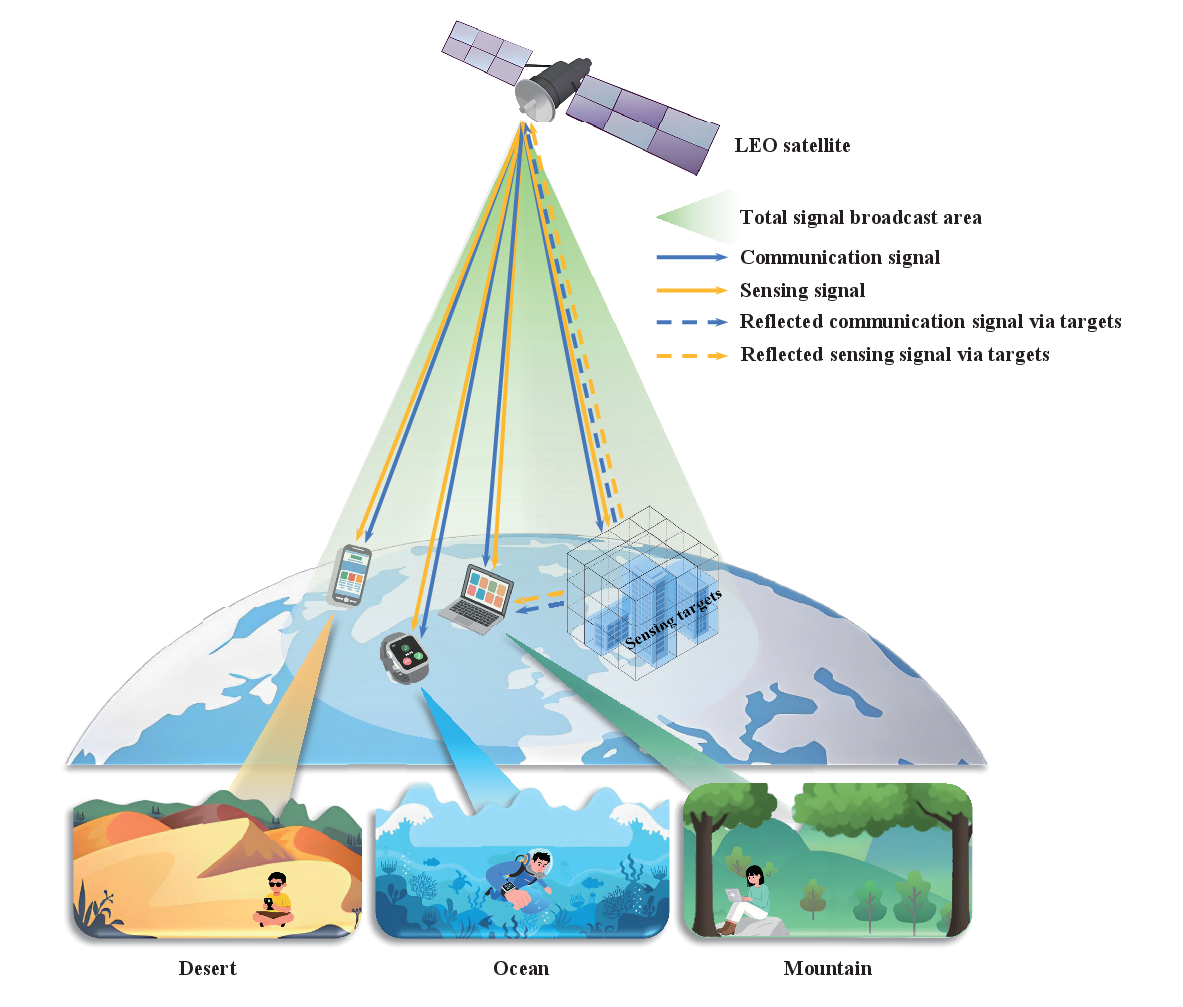}
    \caption{System model for ISAC in LEO satellite systems.}
    \label{fig:model}
\end{figure}

Consider an LEO satellite system integrating communication and sensing. As shown in Fig. \ref{fig:model}, the system is comprised of an LEO satellite equipped with $K$ antennas, $N$ single-antenna CUs, and a sensing area with targets inside. In particular, we divide the sensing area into $I$ blocks of the same size as resolvable sensing targets. The LEO satellite communicates with the CUs while performing target sensing. Specifically, the LEO satellite sends data information to multiple CUs and obtains the targets' reflection coefficients simultaneously over the same spectrum. To this end, at the start of a time slot, the LEO satellite broadcasts the ISAC signal as follows
\begin{equation}
{\bf{x}} = \underbrace {\sum\limits_{i = 1}^I {{{\bf{a}}_i} s_i^{{\text{sens}}}} }_{{\text{Sensing signal}}} + \underbrace {\sum\limits_{n = 1}^N {{{\bf{b}}_n} s_n^{{\text{comm}}}} }_{{\text{Communication signal}}},
\end{equation}
where $ {{\bf{a}}_i}\in \mathbb{C}{^{K \times 1}}$ represents the sensing beamforming vector used by the satellite to transmit the $i$-th sensing signal $s_i^{{\text{sens}}}$, which is used to detect the $i$-th sensing target. In addition, ${{\bf{b}}_n} \in \mathbb{C}{^{K \times 1}} $ represents the communication beamforming vector used by the satellite to transmit the $n$-th communication signal $s_n^{{\text{comm}}}$, which is supposed to be transmitted to the $n$-th CU. For ease of analysis, we consider the signals to be independent of each other and obey a Gaussian distribution with unit norm, i.e.,
\begin{equation}
\begin{array}{l}
\mathbb{E}\left( {s_i^{{\text{sens}}}\left(s_j^{{\text{sens}}}\right)^{\rm{H}}} \right) = \left\{ \begin{array}{l}
0,i \ne j\\
1,i = j
\end{array} \right.{\rm{\ ,      }}\\
{\rm{ }}\mathbb{E}\left( {s_n^{{\text{comm}}}\left(s_m^{{\text{comm}}}\right)^{\rm{H}}} \right) = \left\{ \begin{array}{l}
0,n \ne m\\
1,n = m
\end{array} \right.{\rm{\ ,      }}\\
\mathbb{E}\left( {s_i^{{\text{sens}}}\left(s_n^{{\text{comm}}}\right)^{\rm{H}}} \right) = 0 {\rm{\ .      }}
\end{array}
\end{equation}
In the following, we formulate the channel, target sensing and communication models, respectively.

\subsection{Channel Model}
Considering the CUs are located on the ground, the satellite-ground channel between the $n$-th CU and satellite is modeled as  \cite{Multiple-Satellite}
\begin{equation}
{{\bf{h}}_n} = \sqrt {C_n^{\text{c}}} {(\boldsymbol{\beta}_n^{\text{c}})}^{1/2} \odot {(\boldsymbol{\rho }_n^{\text{c}})}^{1/2} \odot {{\boldsymbol{\ell}}^{\text{c}}} ,
\end{equation}
where ${C_n^{\text{c}}}$ is the large-scale fading, ${{\boldsymbol{\beta }}_n^{\text{c}}}$ is the satellite antenna gain, ${{\boldsymbol{\rho }}_n^{\text{c}}}$ represents the rain fading vector and ${\boldsymbol{\ell}}^{\text{c}}$ represents the small-scale fading.
Specifically, ${C_n^{\text{c}}}$ can be expressed as
\begin{equation}
{C_n^{\text{c}}} = {\left( {\frac{c}{{4\pi f{d_n}}}} \right)^2}\frac{{{G_n}}}{\sigma_{n}^2}, 
\end{equation}
where $f$ and $c$ respectively represent the signal frequency and the light speed. In addition, $\sigma_{n}^2$ is the variance of the additive white Gaussian noise (AWGN) $n_n$ at the $n$-th CU, satisfying $\sigma_{n}^2={\kappa BT_n}$. Herein, $\kappa, B, T_n$ represent Boltzmann constant, channel bandwidth and noise temperature at the $n$-th CU, respectively. Furthermore, ${d_{n}}$ and ${G_n}$ represent the distance from the $n$-th CU to satellite, and the receiving antenna gain of the $n$-th CU, respectively. Given that LEO satellites typically maintain moderate elevation angles over the region, the resulting dominant line-of-sight (LOS) component justifies the use of the Rician fading model.
Therefore, the small-scale fading vector $\boldsymbol{\ell^{\text{c}}}$ can be given by 
\begin{equation}
\boldsymbol{\ell^{\text{c}}}  = \left( {\sqrt {\frac{{\lambda^{\text{c}} _n}}{{{{\lambda^{\text{c}} _n}} + 1}}} {{\bf{h}}_n^{\text{LOS}}}  + \sqrt {\frac{1}{{{\lambda^{\text{c}} _n} + 1}}} {{\bf{h}}_n^{\text{NLOS}}}} \right),
\end{equation}
where 
${{\lambda^{\text{c}} _n}}$
is the Rician factor between the satellite and the $n$-th CU, 
${\bf{h}}_n^{\text{LOS}}$ and ${\bf{h}}_n^{\text{NLOS}}$
indicate the LOS and non-line-of-sight (NLOS) parts of the satellite-ground channel.
Moreover, the element of the satellite antenna gain vector ${{\boldsymbol{\beta }}_n^{\text{c}}}$ can be expressed as
\begin{equation}
{\left[ {{{\boldsymbol{\beta }}_n^{\text{c}}}} \right]_i} = {M}{\left( {\frac{{{J_1}\left( {{\upsilon _i}} \right)}}{{2{\upsilon _i}}} + 36\frac{{{J_3}\left( {{\upsilon _i}} \right)}}{{\upsilon _i^3}}} \right)^3},
\end{equation}
where ${\upsilon _i} = 2.071\left( {{{\sin \left( {{\varphi _{i,n}}} \right)} \mathord{\left/
 {\vphantom {{\sin \left( {{\varphi _{i,n}}} \right)} {\sin \left( {{\varphi ^{3dB}}} \right)}}} \right.\kern-\nulldelimiterspace} {\sin \left( {\varphi ^{3dB}} \right)}}} \right)$ with $M$, ${\varphi _{i,n}}$ and ${\varphi ^{3dB}}$ being the maximum satellite antenna gain, the angle between the satellite's $i$-th antenna and $n$-th CU, the 3dB angle of the satellite, respectively.
In addition, ${{\boldsymbol{\rho }}_n^{\text{c}}}$ is the rain fading vector, whose dB form $({\boldsymbol{\rho }_n^{\text{c}}})^{dB} = 20\lg {{\boldsymbol{\rho }}_n^{\text{c}}}$
follows a lognormal distribution $\ln \left( {({\boldsymbol{\rho }_n^{\text{c}}})^{dB}} \right) \sim {\cal C}{\cal N}\left( {{\mu _{{{{\rho}}_n^{\text{c}}}}},{\mkern 1mu} \sigma _{{{\rho}_n^{\text{c}}}}^2} \right)$ \cite{Joint_beamforming}. The satellite channel's amplitude is governed by antenna gain, small-scale fading, large-scale fading and rain fading, all of which can be regarded as approximately constant within a short time slot. In contrast, the channel phase is highly time-varying due to factors such as satellite mobility and Doppler shifts, and thus changes much faster than the amplitude \cite{Robust_Design}. Therefore, the satellite may obtain outdated channel phase information, resulting in channel phase uncertainty. Specifically, the phase of ${{\bf{h}}_n}$ is given by 
\begin{equation}
{{\boldsymbol{\theta }}_n^{\text{c}}} =  {{\hat{\boldsymbol{\theta }}_n^{\text{c}}}}  + {{\bf{e}}_n^{\text{c}}},
\end{equation}
where ${{\boldsymbol{\theta }}_n^{\text{c}}}$ represents the phase shift of ${{\bf{h}}_n}$, which is independent of each other, $ {\hat{\boldsymbol{\theta }}_n^{\text{c}}}$ is the obtained channel phase at the $n$-th CU and ${{\bf{e}}_n^{\text{c}}}$ denotes the phase error, following an independent and identically 
distributed Gaussian distribution, i.e., ${{\bf{e}}_n^{\text{c}}} \sim {\cal N}({\bf{0}},({\sigma _n^{\text{c}}})^{2}{{\bf{S}}_n^{\text{c}}})$. Herein, ${{\bf{S}}_n^{\text{c}}}$ and $({\sigma _n^{\text{c}}})^{2}$ are the normalized covariance matrix and the variance of phase error, respectively. Then, the actual channel ${\bf{h}}_n$ can be rewritten as
\begin{equation}
{{\bf{h}}_n} =  {{\hat{\bf{h}}_n}}  \odot {{\bf{q}}_n^{\text{c}}} = \operatorname{diag}( { {{\hat{\bf{h}}_n}} } ){{\bf{q}}_n^{\text{c}}},
\end{equation}
where ${{\bf{q}}_n^{\text{c}}} = \exp \{ j{{\bf{e}}_n^{\text{c}}}\}$ and the obtained channel $\hat{\bf{h}}_n$ can be expressed as
\begin{equation}
{{\hat{\bf{h}}_n}} = \sqrt {{C_n^{\text{c}}}} ({{\boldsymbol{\beta }}_n^{\text{c}}})^{1/2} \odot ({\boldsymbol{\rho }_n^{\text{c}}})^{1/2} \odot \left|  {\boldsymbol{\ell^{\text{c}}}}  \right| \odot \exp \{ j{{\hat{\boldsymbol{\theta }}_n^{\text{c}}}}\}.
\end{equation}

Similarly, the actual channel ${\bf{g}}_{i}$ between the $i$-th sensing target and satellite is modeled as
\begin{equation}
\label{channel_g}
{{\bf{g}}_{i}} =  {{\hat{{\bf{g}}}_{i}}}  \odot {{\bf{q}}_i^{\text{s}}} = \operatorname{diag}\left( { {{\hat{{\bf{g}}}_i}} } \right){{\bf{q}}_i^{\text{s}}},
\end{equation}
where ${{\hat{{\bf{g}}}_{i}}}$ is the obtained channel, which can be given by
\begin{equation}
{{\hat{{\bf{g}}}_{i}}} = \sqrt {C_i^{\text{s}}} ({\boldsymbol{\beta}_i^{\text{s}}})^{1/2} \odot ({\boldsymbol{\rho }_i^{\text{s}}})^{1/2} \odot \left|{\boldsymbol{\ell} ^{\text{s}}}\right|\odot \exp \{ j{\hat{\boldsymbol{\theta}}^{\text{s}}_i}\},
\end{equation}
Herein, ${C_i^{\text{s}}}$ represents the large-scale fading factor and ${\boldsymbol{\ell} ^{\text{s}}}$ denotes the small scale fading, given by
\begin{equation}
{\boldsymbol{\ell} ^{\text{s}}} = \sqrt {\frac{\lambda _i^{\text{s}}}{{{\lambda _i^{\text{s}}} + 1}}} {{\bf{g}}_i^\text{LOS}}  + \sqrt {\frac{1}{{{\lambda _i^{\text{s}}} + 1}}} {{\bf{g}}_i^\text{NLOS}},
\end{equation}
where ${\lambda _i^{\text{s}}}$, ${{\bf{g}}_i^\text{LOS}}$ and ${{\bf{g}}_i^\text{NLOS}}$
indicate the Rician factor, LOS and NLOS links between the satellite and the $i$-th sensing target. Besides, ${\boldsymbol{\beta}_i^{\text{s}}}$, ${{\boldsymbol{\rho }}_i^{\text{s}}}$ and ${\hat{\boldsymbol{\theta}}^{\text{s}}_i}$ represent the satellite antenna gain, the rain fading vector whose dB form follows a lognormal distribution $\ln \left( {({{\boldsymbol{\rho }}_i^{\text{s}}})^{dB}} \right) \sim {\cal C}{\cal N}\left( {{\mu _{{{\rho}}_i^{\text{s}}}},{\mkern 1mu} \sigma _{{\rho}_i^{\text{s}}}^{2}} \right)$ and the obtained channel phase shift, respectively.
In addition, ${{\bf{q}}_i^{\text{s}}} = \exp \{ j{{\bf{e}}_i^{\text{s}}}\}$, where ${{\bf{e}}_i^{\text{s}}}$ represents the phase error of ${{\bf{g}}_{i}}$ and ${{\bf{e}}_i^{\text{s}}} \sim {\cal N}({\bf{0}},({\sigma _i^{\text{s}}})^{2}{{\bf{S}}_i^{\text{s}}})$. Herein, ${{\bf{S}}_i^{\text{s}}}$ is the normalized covariance matrix and $({\sigma _i^{\text{s}}})^{2}$ is the variance of phase error, respectively.

Signals transmitted from the satellite to both CUs and sensing targets experience severe Doppler shifts and propagation delays. Considering the deterministic nature of the satellite orbit, the Doppler shifts and propagation delays caused by the high-speed satellite motion can be effectively pre-compensated. Specifically, the state evolution model in \cite{Exploiting_on-orbit} is employed to pre-calculate these shifts based on real-time GNSS data and satellite ephemeris. Pre-compensation is then directly applied at the satellite transmitter by adaptively tuning the carrier frequency and transmit timing. Consequently, after compensating for the deterministic Doppler and delay components arising from the satellite's movement, the residual offsets are primarily attributed to the motion of the targets. The treatment of these target-induced effects will be discussed in the subsequent sensing model.

For the terrestrial channel between the sensing targets and the CUs, the LOS component can be reasonably neglected, thereby reducing the Rician fading model to a Rayleigh fading model. Hence, the terrestrial channel from the $i$-th sensing target to the $n$-th CU can be expressed as
\begin{equation}
{p_{i,n}} = {{\hat{p}_{i,n}}} {o_{i,n}},
\end{equation}
where ${\hat{p}_{i,n}}$ denotes the obtained channel, which is expressed as
\begin{equation}
{\hat{p}_{i,n}} = \sqrt {C_{i,n}^{\text{t}}} ({\rho _{i,n}^{\text{t}}})^{1/2}\left|{p_{i,n}^{\text{NLOS}}}\right|\exp \{ j{\hat{\theta} _{i,n}^{\text{t}}}\}.
\end{equation}
Herein, ${C_{i,n}^{\text{t}}}$ is the large-scale fading factor and ${\rho _{i,n}^{\text{t}}}$ represents the rain fading scalar whose dB form also follows a lognormal distribution $\ln \left( {({\rho _{i,n}^{\text{t}}})^{dB}} \right) \sim {\cal C}{\cal N}\left( {{\mu _{{\rho _{i,n}^{\text{t}}}}},{\mkern 1mu} \sigma _{{\rho _{i,n}^{\text{t}}}}^2} \right)$. In addition, ${o_{i,n}} = \exp \{ j{e_{i,n}^{\text{t}}}\}$, where ${e_{i,n}^{\text{t}}}$ denotes the phase error of channel ${p_{i,n}}$ and also follows normal random distribution with mean 0 and variance $({\sigma _{i,n}^{\text{t}}})^{2}$.

\subsection{Target Sensing Model}
For target sensing, the satellite receives the reflected signal carrying sensing target's information and employs a dedicated sensing receiver to estimate the target's reflection coefficient, facilitating a wide range of practical sensing services. For instance, in agricultural Internet of Things (IoT) systems, the estimated reflection coefficients are essential for characterizing land surface properties, such as soil moisture and surface roughness \cite{Estimation_of}, which provide critical data for precision irrigation. Furthermore, for intelligent surveillance applications, these parameters enable reliable target classification to identify specific objects \cite{Optimum_power}, thereby enhancing regional security.

Unlike the predictable satellite platform, sensing targets may exhibit non-cooperative motion, which introduces additional Doppler shifts and propagation delays to the reflected signal. Let $\nu_i^{s}$ and $\tau_i^{s}$ denote the Doppler shift and propagation delay associated with the $i$-th target, respectively. The continuous-time received signal at the sensing receiver can be expressed as
\begin{equation}
{\bf{w}}(t) = \underbrace{\sum\limits_{i = 1}^I {{r_i} {{\bf{g}}_i}{\bf{g}}_i^{\rm{H}}{\bf{x}}(t - \tau_i^{s})} e^{j2\pi \nu_i^{s} t}}_{{\text{Signals reflected by targets}}} + \underbrace{{{\bf{n}}^{\prime}}(t)}_{{\text{Noise}}},
\end{equation}
where ${\bf{x}}(t)$ represents the continuous-time transmitted signal vector corresponding to the symbol vector ${\bf{x}}$. 

To address this issue, Doppler shift and delay estimation and compensation methods—such as Kalman filtering \cite{Application_of_Kalman}, and maximum likelihood estimation \cite{Maximum_likelihood}—can be employed at the sensing receiver to mitigate the effects of target movement. Furthermore, although the major Doppler and propagation delay components are compensated, the target motion inevitably leaves residual timing offsets and Doppler-induced phase rotations, which can be incorporated into our channel phase uncertainty modeling. Therefore, by omitting the time index for ease of presentation, the received compensated signal at the satellite is concisely given by
\begin{equation}
{\bf{w}} = \underbrace {\sum\limits_{i = 1}^I {{r_i}{{\bf{g}}_i}{\bf{g}}_i^{\rm{H}}{\bf{x}}} }_{{\text{Signals reflected by targets}}} + \underbrace {{\bf{n}}^{\prime}}_{{\text{Noise}}},
\end{equation}
where $r_i$ and ${\bf{n}}^{\prime}$ represent the $i$-th target's reflection coefficient and the AWGN with variance $(\sigma^{\prime})^2$. Then, the satellite performs receive beamforming to enhance the desired signal while suppressing interference from other reflected signals, and the resulting estimate of the $i$-th target's reflection coefficient is given by
\begin{align}
\widehat{r}_i &= \mathbf{v}_i^{\rm{H}}\mathbf{w} \notag \\
&= \mathbf{v}_i^{\rm{H}}\sum_{i = 1}^I r_i\mathbf{g}_i\mathbf{g}_i^{\rm{H}}\mathbf{x} + \mathbf{v}_i^{\rm{H}}{\bf{n}}^{\prime} \notag \\
&= r_i\left( \mathbf{v}_i^{\rm{H}}\mathbf{g}_i\mathbf{g}_i^{\rm{H}}\left( \sum_{l = 1}^I \mathbf{a}_l s_l^{\text{sens}} \right) + \mathbf{v}_i^{\rm{H}}\mathbf{g}_i\mathbf{g}_i^{\rm{H}}\sum_{n = 1}^N \mathbf{b}_n s_n^{\text{comm}} \right) \notag \\
&\quad + \mathbf{v}_i^{\rm{H}}\sum_{\substack{j = 1 \\ j \ne i}}^I r_j\mathbf{g}_j\mathbf{g}_j^{\rm{H}}\sum_{l = 1}^I \mathbf{a}_l s_l^{\text{sens}} \notag \\
&\quad + \mathbf{v}_i^{\rm{H}}\sum_{\substack{j = 1 \\ j \ne i}}^I r_j\mathbf{g}_j\mathbf{g}_j^{\rm{H}}\sum_{n = 1}^N \mathbf{b}_n s_n^{\text{comm}} \notag \\
&\quad + \mathbf{v}_i^{\rm{H}}{\bf{n}}^{\prime},
\end{align}
where ${\bf{v}}_i^{\rm{H}}$ is a $K$-dimensional sensing receive beamforming vector, which is a function related to the $i$-th sensing signal, i.e. ${{\bf{v}}_i} = {{\bf{v}}_i^{\#}}s_i^{\text{sens}}$, where ${{\bf{v}}_i^{\#}}$ is irrelevant to the sensing signal.

Since the estimate of the target reflection coefficient is distorted by interference from other sensing and communication signals in addition to noise, we adopt the MSE as the metric for the performance of sensing, and a minimum MSE (MMSE) receiver is designed to mitigate such interference. In particular, the MSE between ${\widehat r_i}$ and ${r_i}$ is given in equation (\ref{MSE_E}) on the top of the following page.
\begin{figure*}[!t]
\centering
\begin{equation}
\label{MSE_E}
\begin{aligned}[b]
{\operatorname{MSE}}_i^{\text{sens}} &= \mathbb{E}\left\{ (\widehat{r_i} - r_i)(\widehat{r_i} - r_i)^{\rm{H}} \right\} \\
&= \mathbb{E}\Biggl\{ ({{\bf{v}}_i^{\#}})^{\rm{H}} \biggl[ \sum_{j=1}^I {R}_j^2 \mathbf{g}_j \mathbf{g}_j^{\rm{H}} \biggl(  \sum_{l=1}^I \mathbf{a}_l\mathbf{a}_l^{\rm{H}} + \sum_{n=1}^N  \mathbf{b}_n\mathbf{b}_n^{\rm{H}} \biggr) \mathbf{g}_j \mathbf{g}_j^{\rm{H}} \biggr] {\bf{v}}_i^{\#} \\
&\quad -  {R}_i^2 \mathbf{a}_i^{\rm{H}} \mathbf{g}_i \mathbf{g}_i^{\rm{H}} {\bf{v}}_i^{\#} -  {R}_i^2 ({{\bf{v}}_i^{\#}})^{\rm{H}} \mathbf{g}_i \mathbf{g}_i^{\rm{H}} \mathbf{a}_i + {R}_i^2 + (\sigma^{\prime})^2 \| ({{\bf{v}}_i^{\#}})^{\rm{H}} \|^2 \Biggr\}.
\end{aligned}
\end{equation}
\hrulefill
\vspace*{4pt}
\end{figure*}
Herein, ${{R}_i}$ represents the actual reflection coefficient ${r_i}$'s root mean squared (RMS) value and can be expressed as
\begin{equation}
{{{R}}_i} = \sqrt {\sum\limits_{j \in {\Omega _j}} {{\pi _j}{{\left| {{r_j}} \right|}^2}} },
\end{equation}
where ${\pi _j} \in \mathbb{R}{_ + ^{1 \times 1} }$ denotes the prior probabilities of different types of sensing targets \cite{Optimum_power}. We assume that ${{{R}}_i}$ can be obtained in advance. Next, we introduce ${{\bf{G}}_i} = {{\bf{g}}_{i}}{\bf{g}}_{i}^{\rm{H}}$ and ${{\bf{Q}}_i^{\text{s}}} = \mathbb{E}\left( {{{\bf{q}}_i^{\text{s}}}({{\bf{q}}_i^{\text{s}}})^{{\rm{H}}}} \right)$ based on the model of channel ${{\bf{g}}_{i}}$ which is provided in equation (\ref{channel_g}) and simplify it as
\begin{equation}
{{\bf{G}}_{i}} = \operatorname{diag}\left( { {{\hat{\bf{g}}_i}} } \right){{\bf{Q}}_i^{\text{s}}}\operatorname{diag}\left( {{ \hat{\bf{g}}_i ^{\rm{H}}}} \right).
\end{equation}
According to the definition of the phase error ${{\bf{e}}_i^{\text{s}}}$, the element of ${{\bf{Q}}_i^{\text{s}}}$ is calculated as
\begin{equation}
\left[ {\mathbf{Q}_i^{\text{s}}} \right]_{x,y} = 
\begin{cases}
    1, & \text{if } x = y \\
    \exp \left( - j({\sigma_i^{\text{s}}})^{2} \right), & \text{otherwise}
\end{cases}.
\end{equation}
Accordingly, the MSE of target sensing in equation (\ref{MSE_E}) can be simplified as
\begin{equation}
\label{MSE_def}
\begin{aligned}[b]
{\operatorname{MSE}}_i^{\text{sens}} 
=& ({{\bf{v}}_i^{\#}})^{\rm{H}} \Biggl[ \sum_{j=1}^I {{R}}_j^2
    \operatorname{diag}\left( {\hat{\bf{g}}_i} \right) {{\bf{Q}}_i^{\text{s}}} \operatorname{diag}\left( \hat{\bf{g}}_i^{\rm{H}} \right) \\
& \cdot \biggl(  \sum_{l=1}^I {\bf{a}}_l{\bf{a}}_l^{\rm{H}} + \sum_{n=1}^N {\bf{b}}_n{\bf{b}}_n^{\rm{H}} \biggr) \\
& \cdot \operatorname{diag}\left( {\hat{\bf{g}}_i} \right) {{\bf{Q}}_i^{\text{s}}} \operatorname{diag}\left( \hat{\bf{g}}_i^{\rm{H}} \right) \Biggr] {\bf{v}}_i^{\#} \\
&-  {{R}}_i^2 {\bf{a}}_i^{\rm{H}} \operatorname{diag}\left( {\hat{\bf{g}}_i} \right) {{\bf{Q}}_i^{\text{s}}} \operatorname{diag}\left( \hat{\bf{g}}_i^{\rm{H}} \right) {\bf{v}}_i^{\#} \\
&-  {{R}}_i^2 ({{\bf{v}}_i^{\#}})^{\rm{H}} \operatorname{diag}\left( {\hat{\bf{g}}_i} \right) {{\bf{Q}}_i^{\text{s}}} \operatorname{diag}\left( \hat{\bf{g}}_i^{\rm{H}} \right) {\bf{a}}_i \\
&+ {{R}}_i^2 + (\sigma^{\prime})^2 \| ({{\bf{v}}_i^{\#}})^{\rm{H}} \|^2.
\end{aligned}
\end{equation}

\subsection{Communication Model}
The signal received by the $n$-th CU comprises directly transmitted signals from the satellite, signals reflected by sensing targets and noise, which can be expressed as
\begin{equation}
\label{received_signal}
\begin{aligned}[b]
{y}_n^{\text{comm}} = & \underbrace {{\bf{h}}_n^{\rm{H}}{\bf{x}}}_{{\text{Direct\ signal}}} + \underbrace {\sum\limits_{i = 1}^I {{r_i}{p_{i,n}}{\bf{g}}_i^{\rm{H}}{\bf{x}}} }_{{\text{Signals\ reflected\ by\ targets}}} + \underbrace {{{n}_n}}_{{\text{Noise}}}\\
 =& \underbrace {({\bf{h}}_n^{\rm{H}}+{\sum\limits_{j = 1}^I {{r_j}{p_{j,n}}{\bf{g}}_j^{\rm{H}} }}){{\bf{b}}_n} s_n^{\text{comm}}}_{{\text{Desired signal}}}\\ 
 &+ \underbrace {({\bf{h}}_n^{\rm{H}}+{\sum\limits_{j = 1}^I {{r_j}{p_{j,n}}{\bf{g}}_j^{\rm{H}} }})\sum\limits_{m = 1,m \ne n}^N {{{\bf{b}}_m} s_m^{\text{comm}}} }_{{\text{Inter-CU interference}}}\\
 &+ \underbrace {({\bf{h}}_n^{\rm{H}}+{\sum\limits_{j = 1}^I {{r_j}{p_{j,n}}{\bf{g}}_j^{\rm{H}} }})\sum\limits_{i = 1}^I {{{\bf{a}}_i} s_i^{\text{sens}}} }_{{\text{Sensing interference}}} + {{n}_n}.
\end{aligned}
\end{equation}
It can be seen from (\ref{received_signal}) that the desired signal comprises not only the corresponding communication signal directly transmitted from the satellite to the CU, but also that reflected by the targets and then received by the CU. To assess communication performance, the SINR is employed. According to (\ref{received_signal}), the SINR at the $n$-th CU can be given by
\begin{equation}
\label{SINR}
\Gamma_n = \dfrac{
    \left|  \mathbf{h}_n^{\rm{H}} \mathbf{b}_n  \right|^2 + \sum_{j = 1}^I  {\left|  r_j p_{j,n} {\bf{g}}_j^{\rm{H}} {\bf{b}}_n \right|^2}
}{
    \sum\limits_{\substack{m=1 \\ m \neq n}}^N \!\!\! \left| \mathbf{h}_n^{\rm{H}} \mathbf{b}_m  \right|^2 \!+\! 
    \sum\limits_{i=1}^I \! \left| \mathbf{h}_n^{\rm{H}} \mathbf{a}_i  \right|^2 \!+\! 
    {X}_n \!+\! 
    \sigma_n^2
},
\end{equation}
where
\begin{equation}
\begin{aligned}[b]
{X}_n = 
& \sum_{j = 1}^I \sum_{i = 1}^I {\left| r_j p_{j,n} {\bf{g}}_j^{\rm{H}} {\bf{a}}_i  \right|^2} \\ 
& + \sum_{j = 1}^I \sum_{\substack{m=1 \\ m \neq n}}^N {\left|  r_j p_{j,n} {\bf{g}}_j^{\rm{H}} {\bf{b}}_m \right|^2}.
\end{aligned}
\end{equation}

From equation (\ref{MSE_def}) and (\ref{SINR}) above, it can be found that the ISAC in LEO satellite systems' performance critically depends on the transmit beamforming vectors ${{\bf{a}}_i}$, ${{\bf{b}}_n}$ and the receive beamforming vector ${{\bf{v}}_i}$.
Therefore, this paper seeks to create a unified framework for ISAC in LEO satellite systems to enhance both sensing and communication functions' performance by jointly optimizing receive and transmit beamforming. 

\section{Robust Beamforming Design for ISAC in LEO Satellite Systems}
In this section, we propose a robust beamforming design scheme for ISAC in LEO satellite systems considering channel phase uncertainty. In particular, the proposed design minimizes the total transmit power while meeting both communication and sensing requirements. Due to the phase uncertainty, an outage probability constraint is applied to satisfy communication requirements. Consequently, the overall optimization problem can be formulated as                                           
\begin{subequations} 
\label{M1}
\begin{align}
\mathop {\min }\limits_{{{\bf{a}}_i},{{\bf{b}}_n},{{\bf{v}}_i^{\#}}} 
    &\sum\limits_{i = 1}^I {{{\left\| {{{\bf{a}}_i}} \right\|}^2}}  + \sum\limits_{n = 1}^N {{{\left\| {{{\bf{b}}_n}} \right\|}^2}} \label{M1_objective_function} \\
  \text{s.t.}\
    &\operatorname{MSE}_i^{\text{sens}} \le {\delta _i}, \label{M1_MSE_constraint}\\
    &\Pr \left\{ {{\Gamma _n} \ge {\gamma _n}} \right\} \ge 1 - {\varrho_n}, \label{M1_SINR_constraint}
\end{align}
\end{subequations}
where the objective function (\ref{M1_objective_function}) is minimizing the satellite's overall transmit power, constraint (\ref{M1_MSE_constraint}) derived from (\ref{MSE_def}) signifies the QoS requirement for sensing with $\delta _i$ being the maximum tolerance of MSE for the $i$-th sensing target, and constraint (\ref{M1_SINR_constraint}) based on (\ref{SINR}) is the SINR outage probability constraint, which imposes demands upon the quality of communication between the LEO satellite and the CUs it serves with $\Gamma _n$ and $\varrho_n$ being the required minimum SINR and the SINR outage probability threshold for the $n$-th CU, respectively. 

Due to its non-convexity, the formulated optimization problem cannot be solved directly. To address this issue, we introduce two auxiliary variables ${{\bf{A}}_i} = {{\bf{a}}_i}{\bf{a}}_i^{\rm{H}}$, ${{\bf{B}}_n} = {{\bf{b}}_n}{\bf{b}}_n^{\rm{H}}$. 
Then, equation (\ref{SINR}) can be reformulated as
\begin{equation}
\label{SINR_SDR}
\Gamma_n =  \dfrac{
     \operatorname{tr} \! \left( \mathbf{h}_n^{\rm{H}} \mathbf{B}_n \mathbf{h}_n \right) + \sum_{j = 1}^I |r_j p_{j,n}|^2   \operatorname{tr}({\bf{g}}_j^{\rm{H}} {\bf{B}}_n {\bf{g}}_j)
}{
    \sum\limits_{\substack{m=1 \\ m \neq n}}^N \!\!\!  \operatorname{tr} \! \left( \mathbf{h}_n^{\rm{H}} \mathbf{B}_m \mathbf{h}_n \right) \!+\! 
     \sum\limits_{i=1}^I \! \operatorname{tr} \! \left( \mathbf{h}_n^{\rm{H}} \mathbf{A}_i \mathbf{h}_n \right) \!+\! 
    {X}_n^{\prime} \!+\! 
    \sigma_n^2
},
\end{equation}
with
\begin{equation}
\label{Xnprime}
\begin{aligned}[b]
{X}_n^{\prime} = 
&  \sum_{j = 1}^I |r_j p_{j,n}|^2 \sum_{i = 1}^I \operatorname{tr}({\bf{g}}_j^{\rm{H}} {\bf{A}}_i {\bf{g}}_j) \\ 
& + \sum_{j = 1}^I |r_j p_{j,n}|^2 \sum_{\substack{m=1 \\ m \neq n}}^N  \operatorname{tr}({\bf{g}}_j^{\rm{H}} {\bf{B}}_m {\bf{g}}_j).
\end{aligned}
\end{equation}
Furthermore, equation (\ref{MSE_def}) can be reformulated as
\begin{equation}
\label{MSE_SDR}
\begin{aligned}[b]
{\operatorname{MSE}}_i^{\text{sens}} 
=& ({{\bf{v}}_i^{\#}})^{\rm{H}} \Biggl[ \sum_{j=1}^I {{R}}_j^2
    \operatorname{diag}\left( {\hat{\bf{g}}_i} \right) {{\bf{Q}}_i^{\text{s}}} \operatorname{diag}\left( {\hat{\bf{g}}_i}^{\rm{H}} \right) \\
& \cdot \biggl(  \sum_{l=1}^I {\bf{A}}_l + \sum_{n=1}^N {\bf{B}}_n \biggr) \\
& \cdot \operatorname{diag}\left( {\hat{\bf{g}}_i} \right) {{\bf{Q}}_i^{\text{s}}} \operatorname{diag}\left( {\hat{\bf{g}}_i}^{\rm{H}} \right) \Biggr] {\bf{v}}_i^{\#} \\
&-  {{R}}_i^2 {\bf{a}}_i^{\rm{H}} \operatorname{diag}\left( {\hat{\bf{g}}_i} \right) {{\bf{Q}}_i^{\text{s}}} \operatorname{diag}\left( {\hat{\bf{g}}_i}^{\rm{H}} \right) {\bf{v}}_i^{\#} \\
&-  {{R}}_i^2 ({{\bf{v}}_i^{\#}})^{\rm{H}} \operatorname{diag}\left( {\hat{\bf{g}}_i} \right) {{\bf{Q}}_i^{\text{s}}} \operatorname{diag}\left( {\hat{\bf{g}}_i}^{\rm{H}} \right) {\bf{a}}_i \\
&+ {{R}}_i^2 + (\sigma^{\prime})^2 \| ({{\bf{v}}_i^{\#}})^{\rm{H}} \|^2.
\end{aligned}
\end{equation}
However, the presence of ${{\bf{a}}_i}$ in equation (\ref{MSE_SDR}) is coupled with the auxiliary variable ${{\bf{A}}_i}$, which makes the MSE constraint non-convex. To address this issue, we use approximations to avoid the appearance of the first-order term containing ${{\bf{a}}_i}$ in the MSE. Specifically, by adopting an inequality ${\left| {{\bf{Y}} - 1} \right|^2} \le \left|{{\bf{Y}}^2} - 1\right| \ \text{while}\ \mathop{\rm Re}\nolimits({\bf{Y}}) \geq 0$, the MSE can be converted as (\ref{MSE_inequality}) on the top of the following page. 
\begin{figure*}[!t]
\centering
\begin{equation}
\label{MSE_inequality}
\begin{aligned}[b]
\operatorname{MSE}_i^{\text{sens}} 
&\leq \widetilde{\operatorname{MSE}}_i^{\text{sens}}=\Biggl| ({{\bf{v}}_i^{\#}})^{\rm{H}}\operatorname{diag}\left( {\hat{\bf{g}}_i} \right){{\bf{Q}}_i^{\text{s}}}\operatorname{diag}\left( {\hat{\bf{g}}_i}^{\rm{H}} \right)
    \biggl( \sum_{l=1}^{I} {\bf{A}}_l + \sum_{n=1}^{N} {\bf{B}}_n \biggr)
    \operatorname{diag}\left( {\hat{\bf{g}}_i} \right){{\bf{Q}}_i^{\text{s}}}\operatorname{diag}\left( {\hat{\bf{g}}_i}^{\rm{H}} \right){\bf{v}}_i^{\#} - 1 \Biggr| R_i^2 \\
&\quad + \sum_{\substack{j=1 \\ j \ne i}}^{I} ({{\bf{v}}_i^{\#}})^{\rm{H}}\operatorname{diag}\left( {\hat{\bf{g}}_j} \right){{\bf{Q}}_j^{\text{s}}}\operatorname{diag}\left( {\hat{\bf{g}}_j}^{\rm{H}} \right)
    \biggl(\sum_{l=1}^{I} {\bf{A}}_l + \sum_{n=1}^{N} {\bf{B}}_n \biggr)
    \operatorname{diag}\left( {\hat{\bf{g}}_j} \right){{\bf{Q}}_j^{\text{s}}}\operatorname{diag}\left( {\hat{\bf{g}}_j}^{\rm{H}} \right){\bf{v}}_i^{\#} R_j^2 
    + (\sigma^{\prime})^2 \| ({{\bf{v}}_i^{\#}})^{\rm{H}} \|^2.
\end{aligned}
\end{equation}
\hrulefill
\vspace*{4pt}
\end{figure*}

\begin{IEEEproof}
    Please refer to appendix A.
\end{IEEEproof}
By substituting equation (\ref{SINR_SDR}) and equation (\ref{MSE_inequality}) into (\ref{M1}), the original problem is rewritten as
\begin{subequations} 
\label{M3}
\begin{align}
  \mathop{\min}\limits_{\mathbf{A}_i,\mathbf{B}_n,{{\bf{v}}_i^{\#}}} 
    & \sum_{i=1}^I \operatorname{tr}(\mathbf{A}_i) + \sum_{n=1}^N \operatorname{tr}(\mathbf{B}_n) \\
  \text{s.t.}\ 
    & (\ref{M1_SINR_constraint}), \nonumber \\
    &\widetilde{\operatorname{MSE}}_i^{\text{sens}} \le {\delta _i}, \label{M3_MSE_constraint} \\    
    & \mathbf{A}_i \succeq   0, \ \mathbf{B}_n \succeq 0, \label{M2_constraint3} \\
    & \operatorname{Rank}(\mathbf{A}_i) = 1, \operatorname{Rank}(\mathbf{B}_n) = 1.
    \label{M2_constraint4}
\end{align}
\end{subequations}

Nevertheless, since the transmit and the receive beamforming vectors are coupled, the optimal solution cannot be obtained in polynomial time. Therefore, we decompose optimization problem (\ref{M3}) into two subproblems: the transmit beamforming optimization subproblem and the receive beamforming optimization subproblem. Then, we tackle the overall optimization problem using an alternating optimization approach.

\subsection{Transmit Beamforming Optimization Subproblem}
We first address the subproblem where transmit beamforming vectors ${{\bf{a}}_i}$ and ${{\bf{b}}_n}$ are optimized with fixed receive beamforming vector ${{\bf{v}}_i^{\#}}$. With the MMSE receiver, i.e., fixing receive beamforming vector, the original optimization problem is reformulated as
\begin{equation}
\label{Suboptimization_problem_transmit}
\begin{aligned}[b]
  \mathop{\min}\limits_{\mathbf{A}_i, \mathbf{B}_n} 
    & \sum_{i=1}^I \operatorname{tr}(\mathbf{A}_i) + \sum_{n=1}^N \operatorname{tr}(\mathbf{B}_n) \\
  \text{s.t.}\ 
    & (\ref{M1_SINR_constraint}), (\ref{M3_MSE_constraint})-(\ref{M2_constraint4}).
\end{aligned}
\end{equation}
Notably, the SINR outage probability and rank-one constraints make the problem non-convex.

First of all, we deal with the SINR outage probability constraint (\ref{M1_SINR_constraint}). By introducing the auxiliary variable ${{\bf{d}}_{j,n}^{\#}} = o_{j,n}^{\rm{H}}{\bf{q}}_j^{\text{s}}$, the SINR expression is simplified as equation (\ref{SINR_new}) at the beginning of the following page,
\begin{figure*}[t]
\centering
\begin{equation}
\label{SINR_new}
\Gamma_n
= \dfrac{
    ({\bf{q}}_n^{\text{c}})^{\rm{H}}  \mathbf{B}_n \! \odot \! \left( {\hat{\bf{h}}_n} \hat{\bf{h}}_n^{\rm{H}} \right)^{\rm{T}} \! {\bf{q}}_n^{\text{c}} + \sum_{j=1}^I {\bf{d}}_{j,n}^{\#\rm{H}} \Biggl( R_j^2   {\bf{B}}_n \odot \left( {\hat{p}_{j,n}^{\rm{H}}} {\hat{\bf{g}}_j} {\hat{p}_{j,n}} \hat{\bf{g}}_j^{\rm{H}} \right)
\Biggr) {\bf{d}}_{j,n}^{\#}
}{
    ({\bf{q}}_n^{\text{c}})^{\rm{H}} \! \left( \sum\limits_{\substack{m=1 \\ m \neq n}}^N \!\!\!  \mathbf{B}_m \! \odot \! \left( {\hat{\bf{h}}_n} \hat{\bf{h}}_n^{\rm{H}} \right)^{\rm{T}} \! \right) \! {\bf{q}}_n^{\text{c}} \!+\! 
    ({\bf{q}}_n^{\text{c}})^{\rm{H}} \! \left( \sum\limits_{i=1}^I \!  \mathbf{A}_i \! \odot \! \left( {\hat{\bf{h}}_n} \hat{\bf{h}}_n^{\rm{H}} \right)^{\rm{T}} \! \right) \! {\bf{q}}_n^{\text{c}} \!+\! 
    X_n^{\prime} \!+\! 
    \sigma_n^2
}.
\end{equation}
\vspace{-\baselineskip}
\end{figure*}
where the $X_n^{\prime}$ in (\ref{Xnprime}) is reformulated as 
\begin{equation}
\begin{aligned}[b]
X_n^{\prime} = \sum_{j=1}^I {\bf{d}}_{j,n}^{\#\rm{H}} \Biggl( 
    & R_j^2 \sum_{i=1}^I {\bf{A}}_i \odot \left( \hat{p}_{j,n}^{\rm{H}} {\hat{\bf{g}}_j} {\hat{p}_{j,n}} \hat{\bf{g}}_j^{\rm{H}} \right) \\
    + &R_j^2 \sum_{\substack{m=1 \\ m \neq n}}^N  {\bf{B}}_m \odot \left( \hat{p}_{j,n}^{\rm{H}} {\hat{\bf{g}}_j} {\hat{p}_{j,n}} \hat{\bf{g}}_j^{\rm{H}} \right)
\Biggr) {\bf{d}}_{j,n}^{\#}.
\end{aligned}
\end{equation}
Since the fractional form of SINR in equation (\ref{SINR_new}) is difficult to handle directly when it is constrained by ${\gamma _n}$, we equivalently transform the inequality inside the probability constraint into a polynomial representation, which can be further simplified as (\ref{SINR_inequality}) at the top of the next page.
\begin{figure*}[!t]
\centering
\begin{equation}
\label{SINR_inequality}
\begin{aligned}[b]
&\underbrace{({\bf{q}}_n^{\text{c}})^{\rm{H}} \Biggl( \frac{1}{\gamma_n} \mathbf{B}_n \odot \left( \hat{\bf{h}}_n \hat{\bf{h}}_n^{\rm{H}} \right)^{\rm{T}} 
    - \sum_{\substack{m=1 \\ m \neq n}}^N  \mathbf{B}_m \odot \left( \hat{\bf{h}}_n \hat{\bf{h}}_n^{\rm{H}} \right)^{\rm{T}} 
    - \sum_{i=1}^I  \mathbf{A}_i \odot \left( \hat{\bf{h}}_n \hat{\bf{h}}_n^{\rm{H}} \right)^{\rm{T}} \Biggr) {\bf{q}}_n^{\text{c}}}_{\text{First term}} \\
&+ \underbrace{\sum_{j=1}^I {\bf{d}}_{j,n}^{\#\rm{H}} R_j^2\Biggl( \frac{1}{\gamma_n}\mathbf{B}_n \odot \left( \hat{p}_{j,n}^{\rm{H}} {\hat{\bf{g}}_j} {\hat{p}_{j,n}} \hat{\bf{g}}_j^{\rm{H}} \right)-\sum_{i=1}^I \mathbf{A}_i \odot \left( \hat{p}_{j,n}^{\rm{H}} {\hat{\bf{g}}_j} {\hat{p}_{j,n}} \hat{\bf{g}}_j^{\rm{H}} \right) 
    -  \sum_{\substack{m=1 \\ m \neq n}}^N  \mathbf{B}_m \odot \left( \hat{p}_{j,n}^{\rm{H}} {\hat{\bf{g}}_j} {\hat{p}_{j,n}} \hat{\bf{g}}_j^{\rm{H}} \right) \Biggr) {\bf{d}}_{j,n}^{\#}}_{\text{Second term}} 
    - \sigma_n^2 \geq 0.
\end{aligned}
\end{equation}
\hrulefill
\vspace*{4pt}
\end{figure*}
Then, we decompose the left-hand side of inequality (\ref{SINR_inequality}), without the noise term, into two components, each of which is approximated through Taylor expansion by invoking the following lemma.

\begin{lemma}
Given a complex exponential Gaussian vector
${\boldsymbol{\iota}} = \left( {{e^{j{\theta _1}}}, \ldots ,{e^{j{\theta _K}}}} \right)$, a Gaussian random vector $\boldsymbol{\Lambda} $, and a $K$-order Hermitian matrix $ \bf{Z} $ whose real part ${{\bf{Z}}^R} \in {{\cal S}^K}$ and imaginary part ${{\bf{Z}}^I} \in {{\cal K}^K}$, the second-order Taylor expansion of ${{\boldsymbol{\iota}}^{\rm{H}}}{\bf{Z} }{\boldsymbol{\iota}}$ can be expressed as
\begin{equation}
{{\boldsymbol{\iota}}^{\rm{H}}}{\bf{Z} }{\boldsymbol{\iota}} = \sum\limits_{i,j} {{{\bf{Z}}_{i,j}}}  + {{\boldsymbol{\Lambda }}^{\rm{T}}}{f_1}({{\bf{Z}}^R}){\boldsymbol{\Lambda}} + {{\boldsymbol{\Lambda}}^{\rm{T}}}{f_2}({{\bf{Z}}^I}),
\end{equation}
where the linear maps $f_1$, $f_2$ satisfy
\begin{equation}
[f_1(\mathbf{Z}^R)]_{i,j} = \begin{cases}
\mathbf{Z}^R_{i,j} - \sum_{n=1}^K \mathbf{Z}^R_{i,n}, & \text{if } i = j\\
\mathbf{Z}^R_{i,j}, & \text{if } i \neq j
\end{cases}
\end{equation}
and
\begin{equation}
[f_2(\mathbf{Z}^I)]_i = 2\sum_{n=1}^K \mathbf{Z}^I_{i,n}.
\end{equation}
\end{lemma}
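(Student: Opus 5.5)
Here $\boldsymbol{\Lambda}=(\theta_1,\ldots,\theta_K)^{\rm T}$ is the phase vector, so that $\boldsymbol{\iota}=\exp\{j\boldsymbol{\Lambda}\}$ elementwise; in our setting it plays the role of the Gaussian phase errors $\mathbf{e}_n^{\text{c}}$ or $\mathbf{e}_j^{\text{s}}$. The plan is to write the quadratic form entrywise, expand each exponential to second order around $\boldsymbol{\Lambda}=\mathbf{0}$, and then use the symmetry of $\mathbf{Z}^R$ and the skew-symmetry of $\mathbf{Z}^I$ to collapse the result into the stated form.

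\textbf{Step 1: entrywise expansion.} Write
\begin{equation*}
\boldsymbol{\iota}^{\rm H}\mathbf{Z}\boldsymbol{\iota}=\sum_{i,j}\mathbf{Z}_{i,j}e^{j(\theta_j-\theta_i)}.
\end{equation*}
For small phases, use
\begin{equation*}
e^{jx}\approx 1+jx-\tfrac{x^2}{2}, \qquad x=\theta_j-\theta_i .
\end{equation*}
Split $\mathbf{Z}_{i,j}=\mathbf{Z}^R_{i,j}+j\mathbf{Z}^I_{i,j}$. Because the sum is real ($\mathbf{Z}$ is Hermitian), we keep only real contributions:
\begin{equation*}
\sum_{i,j}\mathbf{Z}_{i,j}+\sum_{i,j}\Bigl[-\mathbf{Z}^I_{i,j}(\theta_j-\theta_i)-\tfrac12\mathbf{Z}^R_{i,j}(\theta_j-\theta_i)^2\Bigr].
\end{equation*}
The discarded cross terms $j\mathbf{Z}^R_{i,j}(\theta_j-\theta_i)$ and $-\tfrac{j}{2}\mathbf{Z}^I_{i,j}(\theta_j-\theta_i)^2$ each cancel in the double sum. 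The first is symmetric coefficient times antisymmetric factor; the second is antisymmetric coefficient times symmetric factor. The constant term $\sum_{i,j}\mathbf{Z}_{i,j}$ is real for the same reason.

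\textbf{Step 2: linear term.} Expand
\begin{equation*}
-\sum_{i,j}\mathbf{Z}^I_{i,j}\theta_j+\sum_{i,j}\mathbf{Z}^I_{i,j}\theta_i .
\end{equation*}
Relabel the first sum and use $\mathbf{Z}^I_{j,i}=-\mathbf{Z}^I_{i,j}$. Both sums then become $\sum_i\theta_i\sum_n\mathbf{Z}^I_{i,n}$. This gives $\boldsymbol{\Lambda}^{\rm T}f_2(\mathbf{Z}^I)$ with $[f_2]_i=2\sum_n\mathbf{Z}^I_{i,n}$.

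\textbf{Step 3: quadratic term.} Expand
\begin{equation*}
(\theta_j-\theta_i)^2=\theta_i^2+\theta_j^2-2\theta_i\theta_j .
\end{equation*}
By the symmetry of $\mathbf{Z}^R$, the $\theta_i^2$ and $\theta_j^2$ pieces together contribute $-\sum_i\theta_i^2\sum_n\mathbf{Z}^R_{i,n}$. The cross piece contributes $+\sum_{i,j}\mathbf{Z}^R_{i,j}\theta_i\theta_j$. Combining, the quadratic term is $\boldsymbol{\Lambda}^{\rm T}f_1(\mathbf{Z}^R)\boldsymbol{\Lambda}$, where $f_1$ keeps the off-diagonal entries of $\mathbf{Z}^R$ and replaces each diagonal entry by $\mathbf{Z}^R_{i,i}-\sum_n\mathbf{Z}^R_{i,n}$. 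This matches the stated $f_1$.

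\textbf{Main obstacle.} The main difficulty is sign and factor bookkeeping in Steps 2–3. The sign of $f_2$ depends on the convention $\mathbf{Z}^I_{i,j}$ versus $\mathbf{Z}^I_{j,i}$ and on the Hermitian transpose in $\boldsymbol{\iota}^{\rm H}$; I would fix the convention so the stated $+2\sum_n\mathbf{Z}^I_{i,n}$ comes out, possibly by checking the $K=2$ case explicitly. I would also state clearly that the equality holds up to third-order terms in $\boldsymbol{\Lambda}$, which is justified by the small phase-error variances.
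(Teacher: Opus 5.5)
Your derivation is correct and complete. Expanding each $e^{j(\theta_j-\theta_i)}$ to second order gives exactly the degree-two Taylor polynomial of $\boldsymbol{\iota}^{\rm H}\mathbf{Z}\boldsymbol{\iota}$ in $\boldsymbol{\Lambda}=(\theta_1,\ldots,\theta_K)^{\rm T}$. The imaginary contributions vanish by the symmetric/skew-symmetric pairing you identify. Steps 2--3 then produce precisely $2\mathbf{Z}^I\mathbf{1}$ and $\mathbf{Z}^R-\operatorname{diag}(\mathbf{Z}^R\mathbf{1})$, which are the stated $f_2$ and $f_1$.

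The paper gives no in-text proof and only defers to the cited reference, so your entrywise expansion supplies the argument the paper omits. Your sign worry in the last paragraph is unnecessary. With $\boldsymbol{\iota}^{\rm H}\mathbf{Z}\boldsymbol{\iota}=\sum_{i,j}\mathbf{Z}_{i,j}e^{j(\theta_j-\theta_i)}$, the relabeling in Step 2 already yields $+2\sum_n\mathbf{Z}^I_{i,n}$ with no convention to choose.
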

The proof of Lemma 1 can be found in \cite{A_robust_design}.
Next, by setting 
\begin{equation}
\begin{aligned}[b]
{\bf{T}}_n^{\prime} &= \frac{1}{{{\gamma _n}}}{{\bf{B}}_n} \odot {\left( { {{\hat{\bf{h}}_n}} {{{{\hat{\bf{h}}_n}} }^{\rm{H}}}} \right)^{\rm{T}}} \\
&\quad - \sum\limits_{\substack{m=1 \\ m \neq n}}^N {{{\bf{B}}_m} \odot {{\left( {{{\hat{\bf{h}}_n}} {{{{\hat{\bf{h}}_n}} }^{\rm{H}}}} \right)}^{\rm{T}}}} \\
&\quad - \sum\limits_{i = 1}^I { {{\bf{A}}_i} \odot {{\left( {{{\hat{\bf{h}}_n}} {{{{\hat{\bf{h}}_n}} }^{\rm{H}}}} \right)}^{\rm{T}}}},
\end{aligned}
\end{equation}
the first term in inequality (\ref{SINR_inequality}) can be given by
\begin{equation}
({\bf{q}}_n^{\text{c}})^{\rm{H}}{\bf{T}}_n^{\prime}{{\bf{q}}_n^{\text{c}}} = \sum\limits_{i,j} {{\bf{T}}_{n\left[ {i,j} \right]}^{\prime}}  + {{\boldsymbol{\nu }}^{\prime {\rm{T}}}}{\bf{K}}_n^{\prime}{{\boldsymbol{\nu }}^{\prime}} + 2{{\boldsymbol{\nu }}^{\prime {\rm{T}}}}{\bf{U}}_n^{\prime},
\end{equation}
where $\boldsymbol{\nu}$ is a $K$-dimensional standard Gaussian random vector. Besides, $\mathbf{K}_n^{\prime} = ({\sigma_n^{\text{c}}})^{2} ({{\bf{S}}_n^{\text{c}}})^{1/2} f_1\bigl( \mathrm{Re}(\mathbf{T}_n^{\prime}) \bigr) ({{\bf{S}}_n^{\text{c}}})^{1/2
}$, 
and $\mathbf{U}_n^{\prime} = \tfrac{1}{2} {\sigma_n^{\text{c}}} ({{\bf{S}}_n^{\text{c}}})^{1/2} f_2\bigl( \mathrm{Im}(\mathbf{T}_n^{\prime}) \bigr)$.
In the same way, by setting 
\begin{equation}
\begin{aligned}[b]
\mathbf{T}_{j,n}^{\prime\prime} = R_j^2\Biggl(
    &\frac{1}{\gamma_n}\mathbf{B}_n \odot \left( \hat{p}_{j,n}^{\rm{H}} \hat{\mathbf{g}}_j \hat{p}_{j,n} \hat{\mathbf{g}}_j^{\rm{H}} \right) \\
    &- \sum_{i = 1}^I \mathbf{A}_i \odot \left( \hat{p}_{j,n}^{\rm{H}} \hat{\mathbf{g}}_j \hat{p}_{j,n} \hat{\mathbf{g}}_j^{\rm{H}} \right) \\
    &- \sum_{\substack{m = 1 \\ m \neq n}}^N \mathbf{B}_m \odot \left( \hat{p}_{j,n}^{\rm{H}} \hat{\mathbf{g}}_j \hat{p}_{j,n} \hat{\mathbf{g}}_j^{\rm{H}} \right)
\Biggr),
\end{aligned}
\end{equation}
the expression inside the outer summation of the second term in inequality (\ref{SINR_inequality}) can be simplified as
\begin{equation}
{\bf{d}}_{j,n}^{\#\rm{H}}{\bf{T}}_{j,n}^{\prime\prime}{{\bf{d}}_{j,n}^{\#}} = \sum\limits_{i,k} {{\bf{T}}_{j,n\left[ {i,k} \right]}^{\prime\prime}}  + {\boldsymbol{\nu }}_j^{\prime\prime {\rm{T}}}{\bf{K}}_{j,n}^{\prime\prime}{\boldsymbol{\nu }}_j^{\prime\prime} + 2{\boldsymbol{\nu }}_j^{\prime\prime {\rm{T}}}{\bf{U}}_{j,n}^{\prime\prime},
\end{equation}
where $\boldsymbol{\nu}^{\prime\prime}$ is a $K$-dimensional standard Gaussian random vector, $\mathbf{K}_{j,n}^{\prime\prime} = ({\sigma_{j,n}^{\text{st}}})^2 ({{\bf{S}}_{j,n}^{\text{st}}})^{1/2} f_1\bigl( \operatorname{Re}(\mathbf{T}_{j,n}^{\prime\prime}) \bigr) ({{\bf{S}}_{j,n}^{\text{st}}})^{1/2}$, 
and $\mathbf{U}_{j,n}^{\prime\prime} = \tfrac{1}{2} {\sigma_{j,n}^{\text{st}}} ({{\bf{S}}_{j,n}^{\text{st}}})^{1/2} f_2\bigl( \operatorname{Im}(\mathbf{T}_{j,n}^{\prime\prime}) \bigr)$. Herein, ${\sigma_{j,n}^{\text{st}}}$ and ${{\bf{S}}_{j,n}^{\text{st}}}$ are the variance and the normalized covariance matrix of the cascaded channel of ${{\bf{g}}_{i}}$ and ${p_{i,n}}$'s phase error.
Afterwards, we construct $\boldsymbol{\nu }$, ${\bf{K}}_n$ and ${\bf{U}}_n$ as
\begin{equation}
{\boldsymbol{\nu }} = \left({\boldsymbol{\nu }}^{\prime};{\boldsymbol{\nu }}_1^{\prime\prime};\cdots;{\boldsymbol{\nu }}_I^{\prime\prime} \right),
\end{equation}
\begin{equation}
{{\bf{K}}_n} = \operatorname{blkdiag}\left({{\bf{K}}_n^{\prime}},{{\bf{K}}_{1,n}^{\prime\prime}},\cdots,{{\bf{K}}_{I,n}^{\prime\prime}}\right),
\end{equation}
\begin{equation}
{{\bf{U}}_n} = \operatorname{blkdiag}\left( {{\bf{U}}_n^{\prime}}, {{\bf{U}}_{1,n}^{\prime\prime}}, \cdots, {{\bf{U}}_{I,n}^{\prime\prime}} \right).
\end{equation}
Then, the approximate SINR outage probability constraint derived from (\ref{M1_SINR_constraint}) is rewritten as
\begin{equation}
\label{pr_SINR}
\begin{aligned}[b]
& \Pr \biggl\{ 
  \sum_{i,j} \mathbf{T}_{n[i,j]}^{\prime} 
  + \sum_{j=1}^I \sum_{i,k} \mathbf{T}_{j,n[i,k]}^{\prime\prime} 
  \\
& \quad 
  + \boldsymbol{\nu}^{\rm{T}} \mathbf{K}_n \boldsymbol{\nu} 
  + 2 \boldsymbol{\nu}^{\rm{T}} \mathbf{U}_n 
  \le \sigma_n^2 
\biggr\} \le {\varrho_n}.
\end{aligned}
\end{equation}

To address the non-convexity caused by the probabilistic constraint (\ref{pr_SINR}), we convert it into a series of convex constraints by applying the following Lemma 2.

\begin{lemma}
Given ${\bf{Q}} \in {\mathbb{H}^{K \times K}}$, ${\bf{r}} \in {\mathbb{R}^{K \times K}}$ and ${\bf{e}}\sim{\cal N}({\bf{0}},{\bf{I}})$, for any $\tau > 0$ and $\mu > 1/\sqrt{2}$, it holds that
\begin{equation}
\label{lemma2inequality}
\begin{aligned}[b]
&\Pr \left\{ {{{\bf{e}}^{\rm{T}}}{\bf{Qe}} + 2{\mathop{\rm Re}\nolimits} \left\{ {{{\bf{e}}^{\rm{T}}}{\bf{r}}} \right\} + s \le 0} \right\}\\
&\le \left\{ {\begin{array}{*{20}{l}}
{\exp \left( { - \frac{{{\tau ^2}}}{{4{\Upsilon ^2}}}} \right),}&{0 < \tau  \le 2\lambda \mu \Upsilon }\\
{\exp \left( { - \frac{{\tau \lambda \mu }}{\Upsilon } + {{(\lambda \mu )}^2}} \right),}&{\tau  > 2\lambda \mu \Upsilon, }
\end{array}} \right.
\end{aligned}
\end{equation}
\end{lemma}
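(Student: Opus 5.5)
The plan is to prove Lemma 2 by a Chernoff (exponential Markov) argument applied to the centered Gaussian quadratic form. I read the truncated statement as the usual Bernstein-type inequality: $\tau$ measures the margin $\operatorname{tr}(\mathbf{Q})+s$, and $\Upsilon$ and $\lambda$ are spectral quantities of $\mathbf{Q}$ and $\mathbf{r}$. Concretely I expect $\Upsilon=\sqrt{\|\mathbf{Q}\|_F^2+2\|\mathbf{r}\|^2}$ and $\lambda=\max\{\lambda_{\max}(-\mathbf{Q}),0\}/\Upsilon$ or similar; the proof will pin these down.

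\textbf{Reduction to independent scalar terms.} Since $\mathbf{e}$ is real, $\mathbf{e}^{\rm T}\mathbf{Q}\mathbf{e}=\mathbf{e}^{\rm T}\operatorname{Re}(\mathbf{Q})\mathbf{e}$, so without loss of generality $\mathbf{Q}$ is real symmetric. Write $\mathbf{Q}=\mathbf{V}\operatorname{diag}(\lambda_1,\dots,\lambda_K)\mathbf{V}^{\rm T}$ with $\mathbf{V}$ orthogonal, and set $\tilde{\mathbf{e}}=\mathbf{V}^{\rm T}\mathbf{e}\sim\mathcal N(\mathbf{0},\mathbf{I})$ and $\tilde{\mathbf{r}}=\mathbf{V}^{\rm T}\operatorname{Re}\{\mathbf{r}\}$. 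Then
\[
X=\sum_{k}\lambda_k(\tilde e_k^2-1)+2\sum_k\tilde r_k\tilde e_k ,
\]
and the event of interest becomes $\{X\le-\tau\}$ with $\tau=\operatorname{tr}(\mathbf{Q})+s$. The summands are independent, which makes the moment generating function explicit.

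\textbf{Chernoff bound and MGF.} For $\eta>0$ with $1-2\eta\lambda_k>0$ for all $k$,
\[
\Pr\{X\le-\tau\}\le e^{-\eta\tau}\prod_k \mathbb E\, e^{-\eta(\lambda_k(\tilde e_k^2-1)+2\tilde r_k\tilde e_k)}.
\]
Each factor equals
\[
\exp\!\Big(\eta\lambda_k-\tfrac12\ln(1+2\eta\lambda_k)+\tfrac{2\eta^2\tilde r_k^2}{1+2\eta\lambda_k}\Big).
\]
Using the elementary bound $x-\ln(1+x)\le x^2/(1+x)$ for $x>-1$, the exponent is at most $\eta^2(\lambda_k^2+2\tilde r_k^2)/(1-2\eta\lambda^-)$, where $\lambda^-$ is the largest negative-eigenvalue magnitude. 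Summing over $k$ gives
\[
\Pr\{X\le-\tau\}\le\exp\!\big(-\eta\tau+\eta^2\Upsilon^2/(1-2\eta\lambda^-)\big).
\]
I will restrict $\eta\le\lambda\mu/\Upsilon$, with $\lambda,\mu$ chosen so that the denominator is bounded below by a constant. The requirement $\mu>1/\sqrt2$ is expected to enter exactly here, guaranteeing that the denominator stays at least $\tfrac12$-ish so the quadratic term is at most $\eta^2\Upsilon^2\cdot$const. This is the step I expect to be the main obstacle: matching the constants so that the two regimes come out exactly as $\tau^2/(4\Upsilon^2)$ and $\tau\lambda\mu/\Upsilon-(\lambda\mu)^2$ requires care with how $\lambda$ is defined relative to $\lambda^-$ and $\Upsilon$.

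\textbf{Optimization over $\eta$.} The resulting bound is $\exp(-\eta\tau+\eta^2\Upsilon^2)$ on the admissible interval $(0,\lambda\mu/\Upsilon]$. Its unconstrained minimizer is $\eta^\star=\tau/(2\Upsilon^2)$. If $\tau\le2\lambda\mu\Upsilon$, then $\eta^\star$ is admissible and yields $\exp(-\tau^2/(4\Upsilon^2))$. Otherwise the exponent is decreasing up to the boundary, so taking $\eta=\lambda\mu/\Upsilon$ yields $\exp(-\tau\lambda\mu/\Upsilon+(\lambda\mu)^2)$. These are the two cases of (\ref{lemma2inequality}).

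If the constant bookkeeping fails, my fallback is to invoke the known Bernstein-type inequality for Gaussian quadratic forms (Bechar's inequality) and only verify that its hypotheses match Lemma 2.
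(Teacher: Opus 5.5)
The paper does not prove this lemma; it only cites an external source. Right after the statement, however, it fixes $\lambda=1-1/(2\mu^2)$, $s=\tau-\operatorname{tr}(\mathbf{Q})$ (as you guessed) and $\Upsilon=\mu\|\mathbf{Q}\|_F+\|\mathbf{r}\|/\sqrt{2}$. So your $\Upsilon$ and $\lambda$ are not the ones in the statement. Your last step is correct: the two cases are exactly the exponential of the minimum of $-\eta\tau+\eta^2\Upsilon^2$ over $0<\eta\le\lambda\mu/\Upsilon$. The entire proof is therefore the claim $\ln\mathbb{E}\,e^{-\eta X}\le\eta^2\Upsilon^2$ on that interval, and this is precisely the step you leave open. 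Starting from your estimate $\ln\mathbb{E}\,e^{-\eta X}\le\eta^2(\|\mathbf{Q}\|_F^2+2\|\mathbf{r}\|^2)/(1-2\eta\lambda^-)$, the way to close it is to use the cap itself. The cap gives $\eta\lambda^-\le\eta\|\mathbf{Q}\|_F\le\lambda\mu\|\mathbf{Q}\|_F/\Upsilon$, which lower-bounds $1-2\eta\lambda^-$. You then verify $(\|\mathbf{Q}\|_F^2+2\|\mathbf{r}\|^2)/(1-2\eta\lambda^-)\le\Upsilon^2$; this verification decides which pairs $(\lambda,\Upsilon)$ are admissible, and $\mu$ is there to absorb the denominator. (To reach your displayed estimate you also need $x-\ln(1+x)\le x^2/(2(1+\min\{x,0\}))$; the inequality you quote only gives $2\eta^2\lambda_k^2$.)

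With the paper's constants and real $\mathbf{e}\sim\mathcal{N}(\mathbf{0},\mathbf{I})$ this verification fails, and no bookkeeping can rescue it, because the lemma as transcribed is false. For $\mathbf{r}=\mathbf{0}$ and rank-one negative $\mathbf{Q}$ the cap allows $\eta\lambda^-=\lambda$, and $\lambda>1/2$ once $\mu>1$. So $\eta$ runs past $1/(2\lambda^-)$, where your factor $(1-2\eta\lambda^-)^{-1/2}$ is already infinite.

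Concretely, take $K=1$, $\mathbf{Q}=-1$, $\mathbf{r}=0$. The second case decays like $e^{-\lambda\tau}$, while $\Pr\{e^2\ge 1+\tau\}$ decays only like $e^{-\tau/2}$, so the bound fails for every $\mu>1$ and large $\tau$. For $\mu=2$, $\tau=20$ the bound is $e^{-14.44}\approx 5.4\times 10^{-7}$, but the probability is $\approx 4.6\times 10^{-6}$. The coefficient of $\|\mathbf{r}\|$ is also too small. With $K=1$, $\mathbf{Q}=0$, $\mathbf{r}=1$, $\mu=2$, $\tau=2$ you are in the first case ($2\lambda\mu\Upsilon\approx 2.47$), with bound $e^{-2}\approx 0.135$, yet $\Pr\{e\le -1\}\approx 0.159$.

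Your route does go through for real $\mathbf{e}$ with $\lambda=(1-\mu^{-2})/2$, $\mu>1$, and $\Upsilon=a+b$, where $a=\mu\|\mathbf{Q}\|_F$ and $b=\sqrt{2}\|\mathbf{r}\|$. The cap then gives $1-2\eta\lambda^-\ge(a\mu^{-2}+b)/(a+b)$, and the verification reduces to $a^2\mu^{-2}+b^2\le(a+b)(a\mu^{-2}+b)$, which always holds. The paper's pair $\lambda=1-1/(2\mu^2)$, $\mu>1/\sqrt{2}$ is what the same computation gives for a circularly symmetric complex $\mathbf{e}$ with the form $\mathbf{e}^{\rm H}\mathbf{Q}\mathbf{e}+2\operatorname{Re}\{\mathbf{e}^{\rm H}\mathbf{r}\}$, but then with $\|\mathbf{r}\|$ in place of $\|\mathbf{r}\|/\sqrt{2}$. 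Either way the hypotheses have to be amended. Your fallback to Bechar's inequality proves a different Bernstein-type bound, not this lemma.
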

where $\lambda  = 1 - \left( {1/(2{\mu ^2})} \right)$, $s = \tau  - {\mathop{\operatorname{tr}}\nolimits} ({\bf{Q}})$, and $\Upsilon  = \mu {\left\| {\bf{Q}} \right\|_F} + (1/\sqrt 2 )\left\| {\bf{r}} \right\|$. 
Lemma 2's proof is available in \cite{Outage_constrained}.
By appropriately selecting $\tau$, we can make the inequality (\ref{lemma2inequality}) in Lemma 2 match the form of the SINR outage probability constraint (\ref{pr_SINR}). Specifically, we set the right-hand side expressions of (\ref{lemma2inequality}) in Lemma 2 equal to ${\varrho_n}$ and then we get ${\tau _1}$ and ${\tau _2}$ as
\begin{equation}
\begin{array}{l}
{\tau _1} = 2\sqrt {\ln \frac{1}{{{\varrho_n}}}} \Upsilon , \
{\tau _2} = \left( {\lambda \mu  + \frac{{\ln \frac{1}{{{\varrho_n}}}}}{{\lambda \mu }}} \right)\Upsilon .
\end{array}
\end{equation}
Since the right-hand side of (\ref{lemma2inequality}) decreases monotonically with $\tau$, the search for the minimal $\tau$ is required. We choose $\mu  > \frac{1}{{\sqrt 2 }}$ while making it satisfy $\lambda \mu = \sqrt{ \ln (1 / \varrho_n) }$ so as to equate the minimum value of ${\tau _1}$ and ${\tau _2}$. Additionally, considering the definition of $\lambda$, the parameter $\mu$ can be obtained. In this way, we can transform the SINR outage probability constraint (\ref{M1_SINR_constraint}) as
 \begin{equation}
 \label{SINR_tranformed}
    \tau  \ge 2\sqrt {\ln \frac{1}{{{\varrho_n}}}} \Upsilon.
 \end{equation}
Furthermore, by combining with the definition of $\tau$ and $\Upsilon$, we transform constraint (\ref{SINR_tranformed}) into three convex constraints:
\begin{align}
\operatorname{tr} \left( \mathbf{K}_n \right) + \mathbf{s}_n &\geq 2\sqrt{\ln \frac{1}{\varrho_n}} \left( x_n + y_n \right), \label{SINRtrans_1} \\
\frac{1}{\sqrt{2}} \left\| \mathbf{U}_n \right\| &\leq x_n, \label{SINRtrans_2}\\
\mu_n \left\| \mathbf{K}_n \right\|_F &\leq y_n, \label{SINRtrans_3}
\end{align}
where $x_n$ and $y_n$ are two auxiliary variables. Besides, ${{\bf{s}}_n} = \sum\limits_{i,j} {{\bf{T}}_{n\left[ {i,j} \right]}^{\prime}}  + \sum\limits_{j = 1}^I {\sum\limits_{i,k} {{\bf{T}}_{j,n\left[ {i,k} \right]}^{\prime\prime}} } - \sigma _n^2$.
Based on this, we can reformulate problem (\ref{Suboptimization_problem_transmit}) as 
\begin{subequations}
\label{Suboptimization_problem_change_SINR}
\begin{align}
  \mathop{\min}\limits_{\mathbf{A}_i, \mathbf{B}_n, x_n, y_n} 
    & \sum_{i=1}^I \operatorname{tr}(\mathbf{A}_i) + \sum_{n=1}^N \operatorname{tr}(\mathbf{B}_n) \nonumber \\
  \text{s.t.}\ 
    & (\ref{M3_MSE_constraint})-(\ref{M2_constraint4}),(\ref{SINRtrans_1}),(\ref{SINRtrans_2}),(\ref{SINRtrans_3}). \tag{53}
\end{align}
\end{subequations}

The non-convexity of Problem (\ref{Suboptimization_problem_change_SINR}) is now solely attributed to the rank-one constraints (\ref{M2_constraint4}). To resolve this, a penalty function is incorporated into the objective function. Since both ${{\bf{A}}_i}$ and ${{\bf{B}}_n}$ are positive semidefinite matrices with nonnegative eigenvalues, the rank-one constraints imply that each matrix has a single nonzero eigenvalue, while all others are zero. The transformed rank-one constraints can be expressed as
\begin{equation}
\begin{array}{l}
{\mathop{\operatorname{tr}}\nolimits} \left( {{{\bf{A}}_i}} \right) - {\lambda _{i,\max }} = 0,\ {\mathop{\operatorname{tr}}\nolimits} \left( {{{\bf{B}}_n}} \right) - \lambda _{n,\max }^{\prime} = 0,
\end{array}
\end{equation}
where ${\lambda _{i,\max }}$ and $\lambda _{n,\max }^{\prime}$ are ${{\bf{A}}_i}$'s and ${{\bf{B}}_n}$'s maximum eigenvalues.
Thus, the new objective function with penalty terms can be formulated as 
\begin{equation}
\label{penalty_function}
\begin{aligned}[b]
\mathop {\min }\limits_{{{\bf{A}}_i},{{\bf{B}}_n},{x_n},{y_n}} &\sum\limits_{i = 1}^I {{\mathop{\operatorname{tr}}\nolimits} \left( {{{\bf{A}}_i}} \right)}  + \sum\limits_{n = 1}^N {{\mathop{\operatorname{tr}}\nolimits} \left( {{{\bf{B}}_n}} \right)}\\  
&+ {\rho _1}\sum\limits_{i = 1}^I {\left( {{\mathop{\operatorname{tr}}\nolimits} \left( {{{\bf{A}}_i}} \right) - {\lambda _{i,\max }}} \right)}\\  
&+ {\rho _2}\sum\limits_{n = 1}^N {\left( {{\mathop{\operatorname{tr}}\nolimits} \left( {{{\bf{B}}_n}} \right) - \lambda _{n,\max }^{\prime}} \right)},
\end{aligned}
\end{equation}
where ${\rho _1}$ and ${\rho _2}$ denote the penalty factors. Notably, the penalty factors ${\rho _1}$ and ${\rho _2}$ critically affect the subproblem's convergence speed and solution accuracy. Therefore, it is essential to select appropriate initial values for ${\rho _1}$ and ${\rho _2}$. 

Notice that the resulting objective function (\ref{penalty_function}) remains non-convex because of the presence of these penalty terms. Therefore, an iterative method is employed to overcome this difficulty. For the $t$-th iteration results ${\bf{A}}_i^{(t)}$ and ${\bf{B}}_n^{(t)}$, the following holds:
\begin{equation}
\begin{aligned}[b]
&{{\mathop{\operatorname{tr}}\nolimits}  \left( {{\bf{A}}_i^{(t + 1)}} \right) - {{\left( {{\boldsymbol{\zeta }}_{i,\max }^{(t)}} \right)}^{\rm{H}}}{\bf{A}}_i^{(t + 1)}{\boldsymbol{\zeta }}_{i,\max }^{(t)}}\\
&{\quad  \ge {\mathop{\operatorname{tr}}\nolimits} \left( {{\bf{A}}_i^{(t + 1)}} \right) - \lambda _{i,\max }^{(t + 1)} \ge 0},
\end{aligned}
\end{equation}
\begin{equation}
\begin{aligned}[b]
&{{\mathop{\operatorname{tr}}\nolimits} \left( {{\bf{B}}_n^{(t + 1)}} \right) - {{\left( {{\boldsymbol{\zeta }}_{n,\max }^{\prime(t)}} \right)}^{\rm{H}}}{\bf{B}}_n^{(t + 1)}{\boldsymbol{\zeta }}_{n,\max }^{\prime(t)}}\\
&{\quad  \ge {\mathop{\operatorname{tr}}\nolimits} \left( {{\bf{B}}_n^{(t + 1)}} \right) - \lambda _{n,\max }^{\prime(t + 1)} \ge 0},
\end{aligned}
\end{equation}
where ${{\boldsymbol{\zeta }}_{i,\max }}$ and ${\boldsymbol{\zeta }}_{n,\max }^{\prime}$ are the unit eigenvectors corresponding to ${\lambda _{i,\max }}$ and $\lambda _{n,\max }^{\prime}$, respectively. Based on this, the subproblem (\ref{Suboptimization_problem_transmit}) can finally be reformulated as 
\begin{subequations}
\label{Suboptimization_problem_change_Rank1}
\begin{align}
  \mathop{\min}\limits_{\mathbf{A}_i, \mathbf{B}_n, x_n, y_n} 
    & \sum_{i=1}^I \operatorname{tr}\left( \mathbf{A}_i^{(t+1)} \right) + \sum_{n=1}^N \operatorname{tr}\left( \mathbf{B}_n^{(t+1)} \right) \nonumber\\
    & {+} \rho_1 \sum_{i=1}^I \biggl( \operatorname{tr}\left( \mathbf{A}_i^{(t+1)} \right) {-} \left(\! \boldsymbol{\zeta }_{i,\max}^{(t)} \!\right)^{\!\rm{H}} \mathbf{A}_i^{(t+1)} \boldsymbol{\zeta }_{i,\max}^{(t)} \biggl) \nonumber\\
    & {+} \rho_2 \sum_{n=1}^N \biggl( \operatorname{tr}\left( \mathbf{B}_n^{(t+1)} \right) {-} \left(\! \boldsymbol{\zeta }_{n,\max}^{\prime(t)} \!\right)^{\!\rm{H}} \mathbf{B}_n^{(t+1)} \boldsymbol{\zeta }_{n,\max}^{\prime(t)} \biggl) \nonumber\\
  \text{s.t.}\ 
    & (\ref{M3_MSE_constraint}),(\ref{M2_constraint3}), (\ref{SINRtrans_1}),(\ref{SINRtrans_2}),(\ref{SINRtrans_3}). \tag{58}
\end{align}
\end{subequations}

Then, the subproblem is transformed into a semidefinite programming (SDP) problem, solvable via convex optimization toolboxes like CVX \cite{CVX}.

 \subsection{Receive Beamforming Optimization Subproblem}
Next, we consider the subproblem where the receive beamforming vector ${\bf{v}}_i$ is optimized with fixed transmit beamforming vectors ${{\bf{a}}_i}$ and ${{\bf{b}}_n}$. In this situation, the objective function is fixed. Since the receive beamforming vector is used only for receiving reflected signals from sensing targets at the satellite, only MSE constraint is related to it. Therefore, we need to find the receive beamforming vector ${\bf{v}}_i$ that minimizes the MSE. We tackle this subproblem with an optimized approach. Specifically, the value of MSE is taken as the objective function, so the optimization problem for ${{\bf{v}}_i^{\#}}$ can be given by
\begin{equation} 
\label{MV}
\begin{aligned}[b]
\mathop {\min }\limits_{{{\bf{v}}_i^{\#}}} 
     \widetilde{\operatorname{MSE}}_i^{\text{sens}}.
\end{aligned}
\end{equation}

To convexify the problem, an auxiliary variables ${{\bf{V}}_i^{\#}} = {{\bf{v}}_i^{\#}}({{\bf{v}}_i^{\#}})^{\rm{H}}$ is introduced and used to reformulate the objective function (\ref{MV}). In addition, a penalty term $\rho_3 \sum_{i=1}^I \biggl( \operatorname{tr}\left( {{\bf{V}}_i^{\# (t+1)}} \right) {-} \left(\! \boldsymbol{\zeta }_{i,\max}^{\prime\prime(t)} \!\right)^{\!\rm{H}} {{\bf{V}}_i^{\# (t+1)}} \boldsymbol{\zeta }_{i,\max}^{(t)} \biggl)$ is incorporated into the objective function to deal with the rank-one constraint, where $\rho_3$ and $\boldsymbol{\zeta }_{i,\max}^{\prime\prime(t)}$ denote the penalty factor and the unit eigenvector associated with the maximum eigenvalue $\lambda _{i,\max }^{\prime\prime(t)}$ of ${{\bf{V}}_i^{\#(t)}}$. Therefore, suboptimization problem (\ref{MV}) can be converted to a convex problem (\ref{optimizationV}) at the top of the following page.
\begin{figure*}[!t]
\centering
\begin{equation}
\label{optimizationV}
\begin{aligned}[b]
\mathop {\min }\limits_{{{\bf{V}}_i^{\#}}} 
     &\left| \operatorname{tr}\left({{\bf{V}}_i^{\#}}\operatorname{diag}\left( {\hat{\bf{g}}_i} \right){\bf{Q}}_i^{\text{s}}\operatorname{diag}\left( {\hat{\bf{g}}_i}^{\rm{H}} \right)
    \biggl( \sum_{l=1}^{I} {\bf{A}}_l + \sum_{n=1}^{N} {\bf{B}}_n \biggr)
    \operatorname{diag}\left( {\hat{\bf{g}}_i} \right){\bf{Q}}_i^{\text{s}}\operatorname{diag}\left( {\hat{\bf{g}}_i}^{\rm{H}} \right)\right) - 1 \right| R_i^2 \\
&+ \sum_{\substack{j=1 \\ j \ne i}}^{I} \operatorname{tr}\left({{\bf{V}}_i^{\#}}\operatorname{diag}\left( {\hat{\bf{g}}_j} \right){\bf{Q}}_j^{\text{s}}\operatorname{diag}\left( {\hat{\bf{g}}_j}^{\rm{H}} \right)
    \biggl(\sum_{l=1}^{I} {\bf{A}}_l + \sum_{n=1}^{N} {\bf{B}}_n \biggr)
    \operatorname{diag}\left( {\hat{\bf{g}}_j} \right){\bf{Q}}_j^{\text{s}}\operatorname{diag}\left( {\hat{\bf{g}}_j}^{\rm{H}} \right)\right) R_j^2 \\
    &+ (\sigma^{\prime})^2 \operatorname{tr}({{\bf{V}}_i^{\#}})+\rho_3 \sum_{i=1}^I \biggl( \operatorname{tr}\left( {{\bf{V}}_i^{\# (t+1)}} \right) {-} \left(\! \boldsymbol{\zeta }_{i,\max}^{\prime\prime(t)} \!\right)^{\!\rm{H}} {{\bf{V}}_i^{\# (t+1)}} \boldsymbol{\zeta }_{i,\max}^{\prime\prime(t)} \biggl).
\end{aligned}
\end{equation}
\hrulefill
\vspace*{4pt}
\end{figure*}

Finally, the original optimization problem is converted to multiple SDP problems, solvable via convex optimization toolboxes such as CVX. By performing eigenvalue decomposition (EVD), the optimal solution to problem (\ref{M1}) is derived, i.e.,
\begin{equation}
    \label{EVD}
    \begin{aligned}
    {{\bf{a}}_i^{*}}&=\sqrt {{\lambda _{i,\max }}\left( {{\bf{A}}_i^{*}} \right)} {{\boldsymbol{\zeta }}_{i,\max }^{*}}, \\
    {{\bf{b}}_n^{*}}&=\sqrt {\lambda _{n,\max }^{\prime}\left( {{\bf{B}}_n^{*}} \right)} {\boldsymbol{\zeta }}_{n,\max }^{\prime*},\\
    {{\bf{v}}_i^{\#*}}&=\sqrt {\lambda _{i,\max }^{\prime\prime}\left( {{\bf{V}}_i^{\#*}} \right)} \boldsymbol{\zeta }_{i,\max}^{\prime\prime*},
    \end{aligned}
\end{equation}
where ${\lambda _{i,\max }}\left( {{\bf{A}}_i^{*}} \right)$, $\lambda _{n,\max }^{\prime}\left( {{\bf{B}}_n^{*}} \right)$ and $\lambda _{i,\max }^{\prime\prime}\left( {{\bf{V}}_i^{\#*}} \right)$ denote the maximum eigenvalues of optimal solutions ${{\bf{A}}_i^{*}}$, ${{\bf{B}}_n^{*}}$ and ${{\bf{V}}_i^{\#*}}$, respectively. In addition, ${\boldsymbol{\zeta }}_{i,\max }^{*}$, ${\boldsymbol{\zeta }}_{n,\max }^{\prime*}$ and $\boldsymbol{\zeta }_{i,\max}^{\prime\prime*}$ represent the corresponding unit eigenvectors. As for the penalty factors $\rho_1$, $\rho_2$, and $\rho_3$, they are initialized to balance the magnitude of the penalty terms with that of the original objective function. Specifically, in the subsequent simulations, $\rho_1$ and $\rho_2$ are initially set to $0.5$ to match the total transmit power, while the initial value of $\rho_3$ is determined according to the maximum tolerance of MSE to maintain a comparable scale. During the alternating iterative procedure, each penalty factor is multiplied by a factor $\kappa = 2$ at the end of every iteration, thereby progressively enforcing the rank-one constraints. The selection of $\kappa$ strikes a balance between computational stability and convergence rate, as an excessively large value may lead to computing difficulties while a smaller one results in slow convergence. It is worth noting that too extreme values of the initial penalty factors can lead to unstable convergence behavior, such as fluctuating around the limit point. Combining the above steps, the robust beamforming design for ISAC in LEO satellite systems is summarized in Algorithm 1.

\begin{algorithm}[!ht]
    \renewcommand{\algorithmicrequire}{\textbf{Input:}}
	\renewcommand{\algorithmicensure}{\textbf{Output:}}
	\caption{Robust Beamforming Design for ISAC in LEO Satellite Systems}
    \label{Robust_Algorithm}
    \begin{algorithmic}[1] 
        \REQUIRE  K, I, N, ${\gamma _n}$, ${\varrho_n}$, ${\mu _n}$, ${\delta _i}$, $\sigma _n^2$; 
	    \ENSURE ${{\bf{a}}_i}$, ${{\bf{b}}_n}$, ${\bf{v}}_i^{\#}$; 
        
        \STATE \textbf{Initialize} iteration index $t = 1$, ${\bf{a}}_i^{(0)}$, ${\bf{b}}_n^{(0)}$, $\forall i$, n, ${x_n} = {y_n} = 0,\ \forall n$;
        \STATE \textbf{Set} maximal iteration number $L_{\max}$ and ${M_{{\max}}}$, accuracy $\varepsilon$, coefficient $\kappa$ and penalty factor ${\rho _1}$, ${\rho _2}$.
        \STATE Obtain $\left\{ {{\bf{A}}_i^{(0)},{\bf{B}}_n^{(0)}} \right\}$;
        \REPEAT 
            \REPEAT
            \STATE Obtain ${\bf{V}}_i^{\#(t,l)}$ by solving (\ref{optimizationV}) with ${\bf{A}}_i^{(t-1)}$, ${\bf{B}}_n^{(t-1)}$;
            \IF {$\left| {{\mathop{\operatorname{tr}}\nolimits} \left( {{\bf{V}}_i^{\#(t,l)}} \right) - \lambda _{i,\max }^{\prime\prime(t,l)}} \right| > \varepsilon$ }
                 \STATE Update the penalty factor $\rho _3^{(t,l)} = \kappa \rho _3^{(t,l + 1)}$;
            \ENDIF
            \STATE Update iteration number $l = l + 1$;
            \UNTIL {$l = {L_{\max}}$ or solution converges}
            \STATE Use EVD to obtain ${\bf{v}}_i^{\#(t)}$
            \REPEAT
                \STATE Calculate ${\bf{K}}_n^{(t,m)}$, ${\bf{U}}_n^{(t,m)}$;
                \STATE Obtain ${\bf{A}}_i^{(t,m)}$, ${\bf{B}}_n^{(t,m)}$ by solving problem (\ref{Suboptimization_problem_change_Rank1});
                \IF {$\left| {{\mathop{\operatorname{tr}}\nolimits} \left( {{\bf{A}}_i^{(t,m)}} \right) - \lambda _{i,\max }^{(t,m)}} \right| > \varepsilon$ }
                    \STATE Update the penalty factor $\rho _1^{(t,m)} = \kappa \rho _1^{(t,m + 1)}$;
                \ENDIF
                \IF {$\left| {{\mathop{\operatorname{tr}}\nolimits} \left( {{\bf{B}}_n^{(t,m)}} \right) - \lambda _{n,\max }^{\prime(t,m)}} \right| > \varepsilon$ }
                    \STATE Update the penalty factor $\rho _2^{(t,m)} = \kappa \rho _2^{(t,m + 1)}$;
                \ENDIF 
                \STATE Update iteration number $m = m + 1$;
            \UNTIL {$m = {M_{\max}}$ or solution converges}
            \STATE Obtain $\left\{ {{\bf{A}}_i^{(t)},{\bf{B}}_n^{(t)}} \right\}$;
            \STATE Update iteration index $t = t + 1$;
        \UNTIL {convergence.}
        \STATE Obtain ${\bf{a}}_i^*$ and ${\bf{b}}_n^*$ by performing EVD on ${\bf{A}}_i^*$ and ${\bf{B}}_n^*$ according to (\ref{EVD});
    \end{algorithmic}
\end{algorithm}

\subsection{Algorithm Analysis}
In this section, the proposed algorithm's convergence and complexity is analyzed.

\textit{Convergence Analysis:}
Given the MMSE receiver, i.e., fixing the receive beamforming vector ${\bf{v}}_i$, the subproblem (\ref{Suboptimization_problem_change_Rank1}) is convex over optimization variables $\mathbf{A}_i$, $\mathbf{B}_n$, $x_n$ and $y_n$, and thus can be solved by CVX, which ensures that the result of each iteration will be less than the last one. Meanwhile, subproblem (\ref{optimizationV}) is also convex which ensures that the result ${\bf{v}}_i$ of the new iteration can reduce the MSE under the same value of $\mathbf{A}_i$ and $\mathbf{B}_n$. Thus, the total transmit power decreases monotonically. Besides, QoS constraints in (\ref{M1_MSE_constraint})-(\ref{M1_SINR_constraint}) ensure a lower bound on the total transmit power. Therefore, Algorithm 1 converges based on the monotone bounded convergence (MBC) theorem \cite{Real_Analysis}. Moreover, as shown in Fig. \ref{fig:cvg}, the algorithm is able to converge quickly within several iterations.

\begin{figure}
    \centering
    \includegraphics[width=1\linewidth]{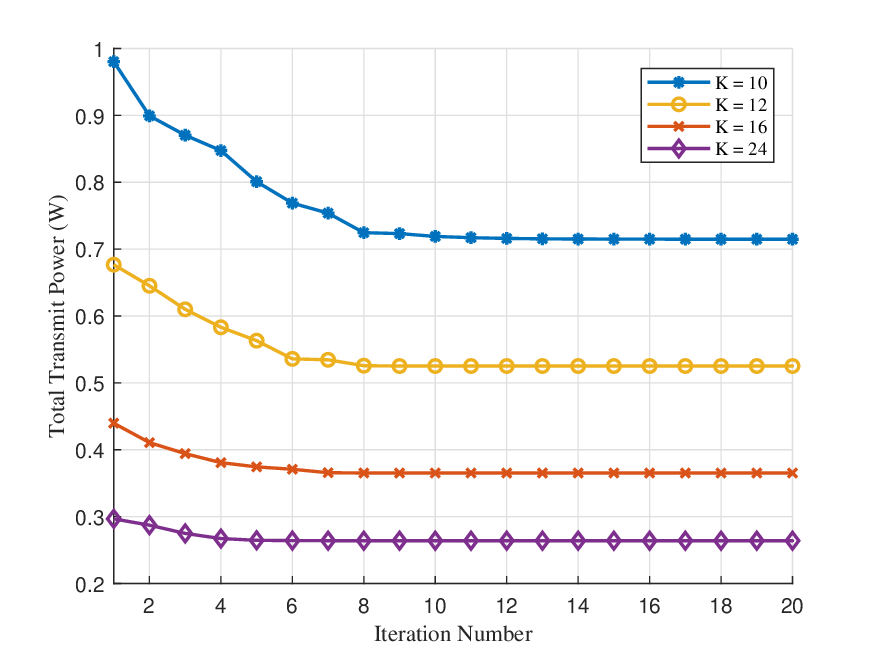}
    \caption{Convergence behavior of Algorithm 1.}
    \label{fig:cvg}
\end{figure}

\textit{Complexity Analysis:}
Considering that the proposed algorithm is iterative and performs the same steps in each iteration, our analysis is therefore devoted to the complexity of each iteration. As can be observed, the main computational complexity comes from (\ref{Suboptimization_problem_change_Rank1}). Containing only second order cone (SOC) and linear matrix inequality (LMI) constraints, the optimization problem is solvable via a standard inter-point method (IPM) \cite{Convex_Optimization}. Iteration computation cost and iteration complexity constitute the two primary components of a generic IPM's complexity. With a certain $\varsigma \ {\rm{ > }}\ 0$, the iteration complexity of solving an $\varsigma$-optimal solution is on the order of $ \ln \left( {{1 \mathord{\left/{\vphantom {1 \varsigma }} \right.\kern-\nulldelimiterspace} \varsigma }} \right)\psi$, with $\psi$ being the barrier parameter which measures the conic constrains' geometic complexity \cite{Outage_constrained}. Meanwhile, the coefficient matrix's construction and factorization determines the iteration computation cost. Considering that problem (\ref{Suboptimization_problem_change_Rank1}) has $N+I$ LMI constraints of dimension 1, $N+I$ LMI constraints of size K and $N$ SOC constraints of dimension $\left( {I + 1} \right)K + 1$, $N$ SOC constrains of size $\left( {I + 1} \right)K^2 + 1$ and the decision variable $n=\mathcal{O}\left( (I + N)K^2 \right)$. Therefore, the worst case complexity for per iteration of Algorithm 1 is on the order of $\ln \left( {{1 \mathord{\left/
 {\vphantom {1 \varsigma }} \right. \kern-\nulldelimiterspace} \varsigma }} \right)\psi $. Herein $\begin{aligned}[t]
\psi = &\sqrt{(N+I)(K+1)+4N} \cdot n \\
&\cdot \bigl[ (N+I)(K^3+1) + n((N+I)(K^2+1)) \\
&\quad + N\bigl((K(I+1)+1)^2 + (K^2(I+1)+1)^2\bigr) + n^2 \bigr],
\end{aligned}$
with decision variable $n=\mathcal{O}\left( (I + N)K^2 \right)$.

\section{Simulation}
In this section, the effectiveness of the proposed algorithm for ISAC in LEO satellite systems is evaluated through extensive simulations. The simulation parameters, unless specified, are as shown in Table I.

\begin{table*}[htb]   
\normalsize
\begin{center}   
\caption{Simulation Parameters}  
\label{table:1} 
\begin{tabular}{|c|c|}   
\hline   \textbf{Parameter} & \textbf{Value}  \\
\hline   Satellite orbit & LEO  \\
\hline   Number of satellite antennas & $K=10$   \\ 
\hline   Number of sensing targets & $I=2$   \\  
\hline   Number of CUs & $N=8$   \\      
\hline   Required minimum SINR & $\gamma_i=3\ \mathrm{dB}$   \\
\hline   Maximum tolerance of MSE & $\delta_n=0.1$   \\
\hline   SINR outage probability & $p_n=0.1$   \\
\hline   RMS of target reflection coefficient & $R_i=1$   \\
\hline   Standard deviation of phase error & $\sigma _n^{\text{c}}=\sigma _i^{\text{s}}=\sigma _{i,n}^{\text{t}}=\sigma_0=0.1$   \\
\hline   Normalized covariance matrices of phase error & ${\bf{S}}_n^{\text{c}}={\bf{S}}_i^{\text{s}}=\bf{I}$   \\
\hline   Carrier frequency & $f_c=5\ \mathrm{GHz}$   \\
\hline   Bandwidth & $B=25\ \mathrm{MHz}$   \\
\hline   Boltzmann constant & $\kappa=1.38 \times {10^{ - 23}}\ \mathrm{J/m}$ \\
\hline   Noise temperature & $T=300\ \mathrm{K}$   \\
\hline   Distance between satellite and CUs & $d_n\in [1000 \sim 2000]\ \mathrm{km}$   \\
\hline   Distance between satellite and sensing targets & $d_0^{\prime} =1000\ \mathrm{km}$   \\
\hline   Distance between sensing targets and CUs & $d_{i,n}^{\prime\prime}\in [400 \sim 600]\ \mathrm{km}$   \\
\hline   CU receiving antenna gain & $G_n=3\ \mathrm{dBi}$   \\
\hline   Maximum satellite antenna gain & $M=53\ \mathrm{dBi}$   \\
\hline   3-dB angle & ${\varphi ^{3dB}}=0.4^{\circ}$   \\
\hline   Rain fading mean & ${\mu _{{{\rho}}_n^{\text{c}}}}={\mu _{{{\rho}}_i^{\text{s}}}}={\mu _{{{\rho}}_{i,n}^{t}}}=-1\ \mathrm{dB}$\\
\hline   Rain fading variance & ${\mkern 1mu} \sigma _{{{\rho}}_n^{\text{c}}}^2={\mkern 1mu} \sigma _{{{\rho}}_i^{\text{s}}}^2={\mkern 1mu} \sigma _{{{\rho}}_{i,n}^{\text{t}}}^2=0.5\ \mathrm{dB}$  \\
\hline   Rician factor & ${{\lambda^{\text{c}} _n=\lambda _i^{\text{s}}}}=5$   \\
\hline
\end{tabular}   
\end{center}   
\end{table*}

Firstly, in Fig. \ref{fig:cvg}, the proposed algorithm's convergence performance is evaluated under different satellite antenna numbers. We can observe that in all cases the transmit power decreases monotonically with iterations and converges within only a few iterations. Hence, the proposed algorithm can be applied over fast time-varying LEO satellite channels. Moreover, the transmit power declines as more satellite antennas are added, indicating that the algorithm performance can be enhanced by deploying more satellite antennas.

\begin{figure}
    \centering
    \includegraphics[width=1\linewidth]{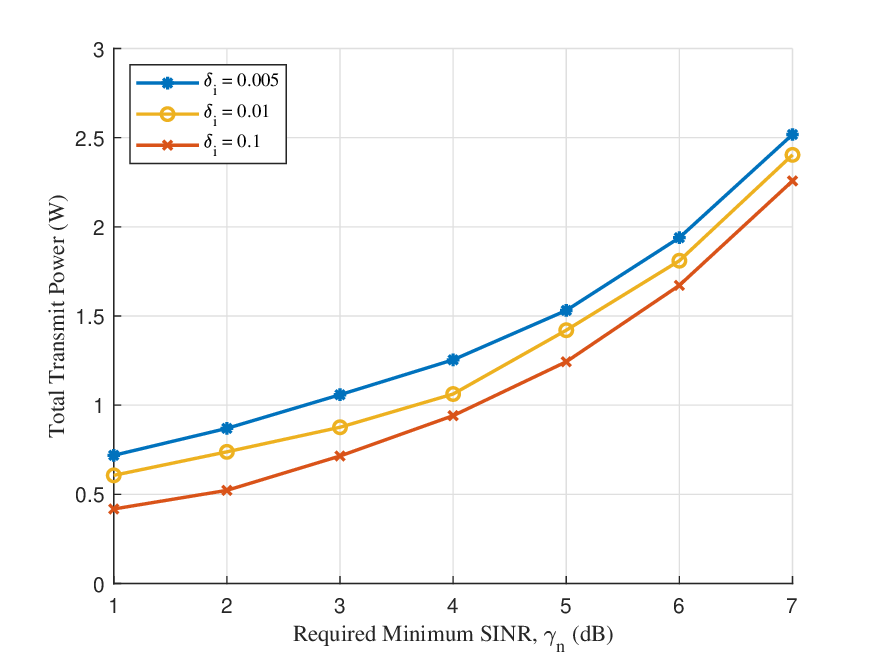}
    \caption{Total transmit power versus the required minimum SINR for different maximum tolerance of MSE.}
    \label{fig:diffMSE}
\end{figure}

Then, we investigate the required minimum SINR of communication ${\gamma _n}$ and the maximum tolerance of MSE ${\delta _i}$'s impact on transmit power. As shown in Fig. \ref{fig:diffMSE}, a rise in ${\gamma _n}$ drives an accelerating increase in transmit power. This indicates that a larger $\gamma_n$ represents a stricter constraint on SINR. With the noise fixed, the LEO satellite requires more transmit power, i.e., increasing the power of the beams to mitigate the influence of noise and interference. In terms of the change in the maximum tolerance of MSE, we can observe that the larger ${\delta _i}$ is, the lower the power, since larger maximum tolerance of MSE represents lower sensing QoS requirements \footnote{ The selection of the maximum tolerance of MSE $\delta_i$ is grounded in the RMS value of the target reflection coefficient $R_i$. Given that $R_i = 1$ in our setup, the investigated thresholds $\delta_i \in \{0.1, 0.01, 0.005\}$ effectively represent different levels of sensing precision, corresponding to normalized MSE levels of $10\%$, $1\%$, and $0.5\%$ with respect to $|R_i|^2$, respectively. Notably, $\delta_i = 0.1$ is designated as the default value to ensure problem feasibility across the investigated phase uncertainty range.}.  By relaxing this sensing requirement, the LEO satellite can reduce the power of sensing beams. This simultaneously reduces the interference beams reflected from the sensing targets to the CUs, making it easier to meet communication QoS requirements. Therefore, balancing power consumption and sensing effectiveness is of great significance.

\begin{figure}
    \centering
    \includegraphics[width=1\linewidth]{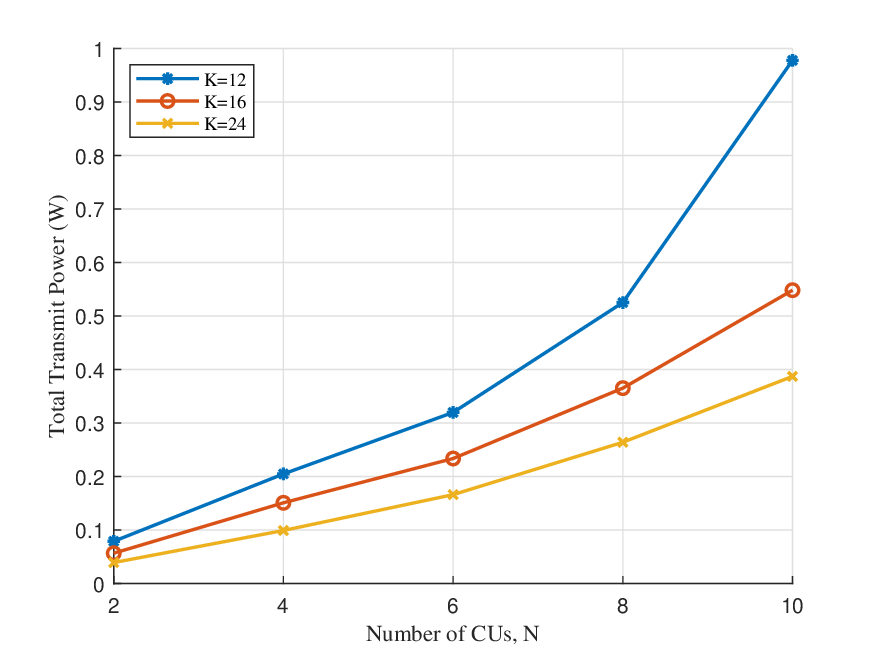}
    \caption{Total transmit power versus the number of CUs with different satellite antennas number.}
    \label{fig:diffN}
\end{figure}

Next, we investigate the influence of satellite antenna number $K$ and CU count $N$ on the proposed algorithm's performance. As shown in Fig. \ref{fig:diffN}, with a fixed $K$, the total transmit power increases at an ever faster rate as more CUs are served. This is because as the number of CUs rises, more beams are needed to transmit communication signals, meaning more power with the same QoS requirements. In addition, interference from both satellite-transmitted and target-reflected non-corresponding communication signals has increased, which leads to the rise in transmit power. Furthermore, from the relationship of the three curves in the above figure, it can be observed that the curve with a larger $K$ always has a lower total transmit power than the curve with a smaller $K$. That is, adding more satellite antennas serves to enhance the algorithm's performance, as more antennas provide higher spatial multiplexing gain, thereby enhancing the performance. However, the improvement in algorithm performance brought about by increasing the number of satellite antennas is not unlimited. As we can see, the performance gap between antenna number $K = 10$ and $K = 12$ is less than that between antenna number $K = 8$ and $K = 10$. Since more satellite antennas leads to increased overhead, balancing cost and performance becomes necessary.

\begin{figure}
    \centering
    \includegraphics[width=1\linewidth]{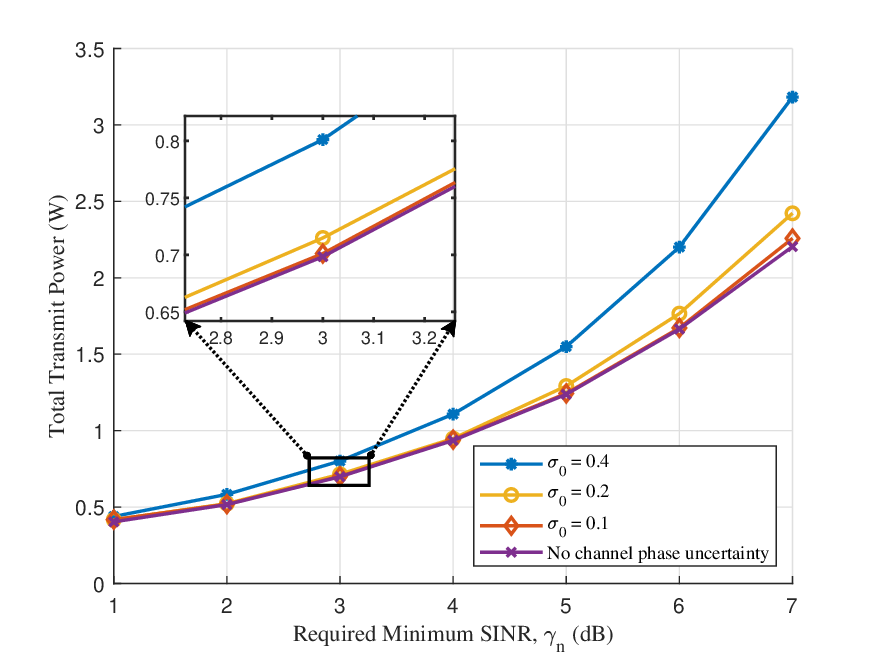}
    \caption{Total transmit power versus required minimum SINR under different channel phase uncertainties.}
    \label{fig:phase_uncertainty}
\end{figure}

Further, we study the influence of the variance of phase error on the total transmit power. Since phase uncertainty exists in satellite channels, the imperfect CSI should be taken into account. Notably, the standard deviation $\sigma_0$ here also effectively captures the residual Doppler-induced phase rotations stemming from target mobility. The investigated range of $\sigma_0$ covers various channel conditions, ranging from relatively small impairments ($\sigma_0 = 0.1$) to highly dynamic environments ($\sigma_0 = 0.4$). Fig. \ref{fig:phase_uncertainty} shows that the transmit power rises with rising phase uncertainty, especially when higher SINR requirement is needed. It is because beamforming requires precise phase control to form direct beams, which is distorted by the channel phase error. Since higher SINR requirement means greater beamforming demands, more transmit power is needed to meet QoS requirements. Notably, the gap between the case of no channel phase uncertainty case and the case of $\sigma_0=0.1$ is minor, which proves that the algorithm has good robustness. However, as $\sigma_0$ further increases, the required power rises at an accelerating rate, indicating that excessive cumulative errors—stemming from both CSI acquisition (e.g., noise and quantization) and residual target mobility—will eventually lead to more significant performance degradation.

\begin{figure}
    \centering
    \includegraphics[width=1\linewidth]{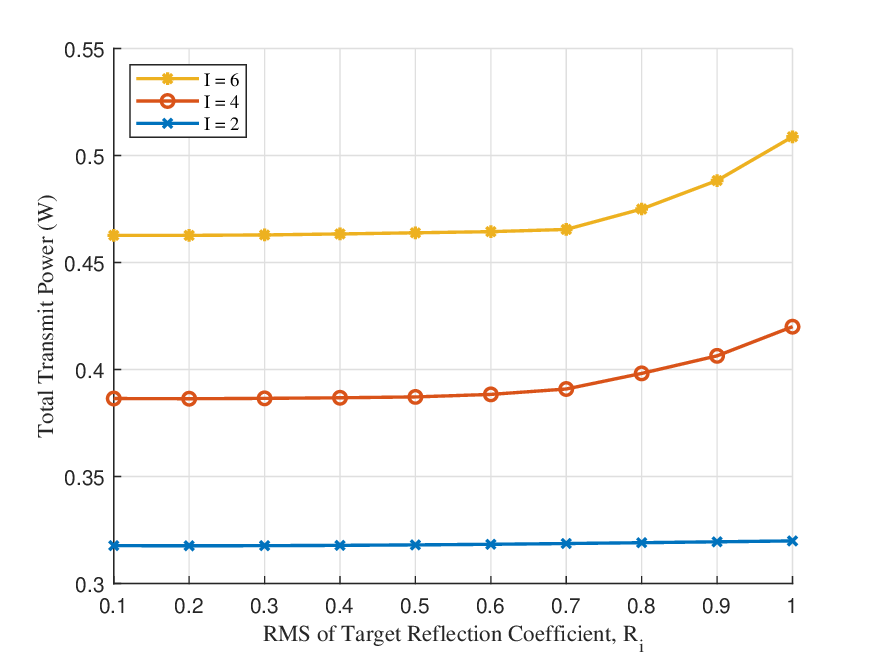}
    \caption{Total transmit power versus RMS of target reflection coefficient with different number of sensing targets.}
    \label{fig:diff_Ri}
\end{figure}

In addition, since the proposed algorithm estimates the reflection coefficient under the premise that the RMS value of the sensing target's reflection coefficient is known, the impact of $R_i$ on the algorithm's performance needs to be examined. In Fig. \ref{fig:diff_Ri} we change the value of $R_i$ under different sensing target quantities and obtain three curves. As the number of sensing targets increases, it is seen that the transmit power rises, because more sensing targets mean more beams and greater mutual interference. Besides, we can find that the transmit power shows a subtle increase for small $R_i$ but rises sharply when $R_i$ is close to 1 and the growth becomes greater with more sensing targets. This is due to the fact that MSE is an absolute measure, i.e., for a fixed MSE, estimating a larger true quantity implies better relative accuracy, which in turn requires a higher transmit power. Therefore, it is important to choose the proper required minimum MSE based on the value of $R_i$.

\begin{figure}
    \centering
    \includegraphics[width=1\linewidth]{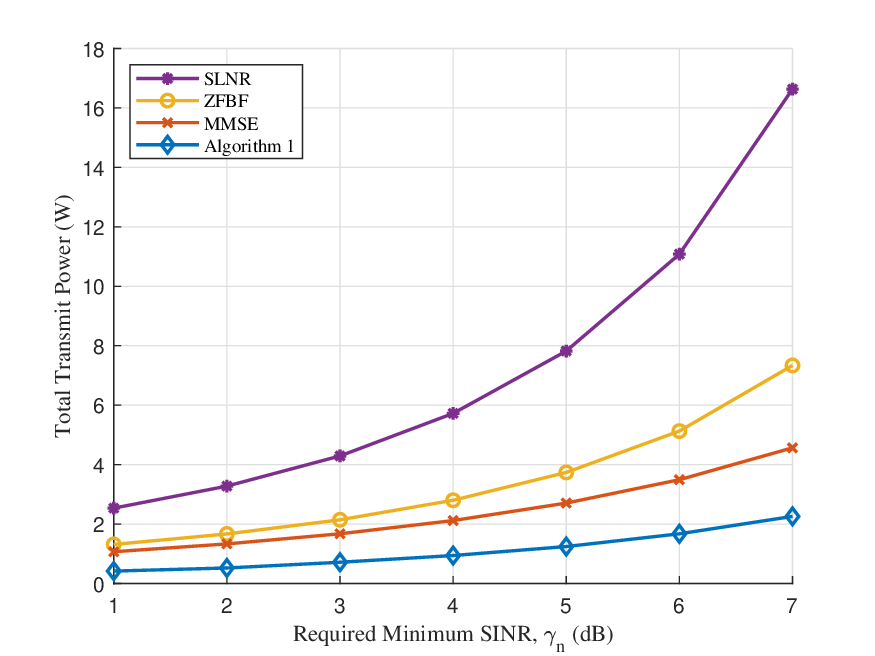}
    \caption{Comparison of different algorithms' performances under different required minimum SINR.}
    \label{fig:diff_algorithm}
\end{figure}

Finally, a performance comparison of different algorithms under different required minimum SINR is presented. We select zero-forcing beamforming (ZFBF) algorithm, MMSE algorithm, and signal-to-leakage-plus-noise ratio (SLNR) algorithm as baseline. All of the three baseline algorithms fix the beams first, and then optimize the total transmit power. Specifically, ZFBF algorithm minimizes the interference, MMSE algorithm reduces the MSE between the desired and received signals \cite{Linear_precoding} and SLNR algorithm maximizes the value of SLNR while obtaining the beamforming vectors \cite{A_leakage-based}. In Fig. \ref{fig:diff_algorithm}, we observe that the Algorithm 1 performs better than baselines, especially with higher SINR, which demonstrates that the proposed algorithm is highly effective. Besides, MMSE algorithm comes next, since the QoS constraints in (\ref{M1}) are sensitive to noise which is considered in MMSE but not in ZFBF. In addition, the SLNR algorithm performs worst because it requires more power to enhance the target signal in order to suppress leakage interference and noise. 

\section{Conclusion}
This paper established an integrated sensing and communication framework for LEO satellite systems. Based on the proposed framework, this paper addressed the key challenge of cross-functional interference exacerbated by imperfect CSI via introducing a robust beamforming design algorithm that enhances both sensing and communication performance in the presence of channel phase uncertainty while considering the limited energy in LEO satellites. Through extensive simulations, we verified the reliability and effectiveness of the proposed algorithm. At the same time, through analysis of several key parameters, we have discovered that it is necessary to balance the proposed algorithm's performance and power consumption when setting the parameters. In practice, the proposed framework can be applied to emerging dual-functional IoT scenarios. For instance, in smart agriculture, the satellite provides broadband downlink to remote farming hubs while simultaneously performing large-scale soil moisture mapping to support precision irrigation. Similarly, for intelligent surveillance applications, it enables high-speed data transmission to monitoring terminals while classifying moving targets to enhance regional security. For future research, extending the single-satellite design to multi-satellite cooperative networks is a promising direction.

\begin{appendices}
\section{Proof of the MSE inequality}
Based on the previous modeling, we can equivalently rewrite equation (\ref{MSE_E}) as
\begin{equation}
\label{MSE_appendix}
\begin{aligned}[b]
\operatorname{MSE}_i^{\text{sens}} 
&= {\left| \mathbf{Y} - 1 \right|}^2 R_i^2 \\
&+ \sum_{\substack{j=1 \\ j \neq i}}^I \sum_{l=1}^I R_j^2 \left| 
        \mathbf{v}_i^{\rm{H}} \operatorname{diag}(\hat{\mathbf{g}}_i) \mathbf{Q}_i^{\prime} \operatorname{diag}(\hat{\mathbf{g}}_i^{\rm{H}}) \mathbf{a}_l 
    \right|^2 \\
&+ \sum_{\substack{j=1 \\ j \neq i}}^I \sum_{n=1}^N R_j^2 \left| 
        \mathbf{v}_i^{\rm{H}} \operatorname{diag}(\hat{\mathbf{g}}_i) \mathbf{Q}_i^{\prime} \operatorname{diag}(\hat{\mathbf{g}}_i^{\rm{H}}) \mathbf{b}_n
    \right|^2 \\
&+ (\sigma^{\prime})^2 \| ({{\bf{v}}_i^{\#}})^{\rm{H}} \|^2, 
\end{aligned}
\end{equation}
where \[
\begin{split}
{\bf{Y}} = {\bf{v}}_i^{\rm{H}} \operatorname{diag}\left( {{\hat{\bf{g}}_i}} \right) {\bf{Q}}_i^{\text{s}} \operatorname{diag}\left( {{\hat{\bf{g}}_i}}^{\rm{H}} \right) 
    \Biggl( & \sum\limits_{l=1}^{I} {{\bf{a}}_l  s_l^{sens}} \\
    & + \sum\limits_{n=1}^N {{\bf{b}}_n s_n^{comm}} \Biggr).
\end{split}
\]

The MSE is made up of four parts, each of which is greater than zero. Since the QoS constraints require us to minimize the MSE as much as possible, each part of MSE should be close to zero. Since the reflection coefficients of most materials are greater than 0.1, we can set the MSE to be less than the mean squared value of the sensing target reflection coefficient. Under these conditions, if $\mathop{\rm Re}\nolimits(\bf{Y})$ is less than 0, MSE will not satisfy the constraints. While $\mathop{\rm Re}\nolimits(\bf{Y})$ is greater than 0, we have 
${\left| {{\bf{Y}} - 1} \right|^2} \le \left|{{\bf{Y}}^2} - 1\right|$. Since other parts of MSE in (\ref{MSE_appendix}) is the same as that in (\ref{MSE_inequality}), we can prove that the inequality in (\ref{MSE_inequality}) is valid.
\end{appendices}

\end{document}